\documentclass{article}
\usepackage{graphicx} 
\usepackage[affil-it]{authblk}
\usepackage[utf8]{inputenc}
\usepackage[english]{babel}
\usepackage{titlesec}
\usepackage[margin=1in]{geometry}
\usepackage{tikz}
\usetikzlibrary{decorations.pathreplacing}
\usepackage{mathtools}
\usepackage{chngcntr}
\usepackage{lineno}
\usepackage{enumerate}
\usepackage{adjustbox}
\usepackage{titlesec}
\usepackage[utf8]{inputenc}
\usepackage{mathrsfs}
\usepackage{cite}
\usepackage{cite}

\usepackage{enumitem}

\usepackage{amsthm}

\newtheorem{theorem}{Theorem}[section]

\usepackage[hyphens]{url}
\usepackage{hyperref}
\usepackage{float}
\usepackage[section]{placeins}
\usepackage[dvipsnames]{xcolor}
\usepackage{soul}
\usepackage[labelfont=bf,font=small]{caption}
\usepackage{bm}
\usepackage[title]{appendix}
\usepackage{graphicx}
\usepackage{tabularx}
\usepackage{subcaption}
\usepackage{float}
\usepackage{mathtools}
\usepackage{amsmath,amssymb,amsthm}
\usepackage{oubraces}
\numberwithin{equation}{section}
\usepackage{chngcntr}
\usepackage{mathrsfs}
\usepackage{multirow}
\usepackage{chngcntr}
\usepackage{parskip}
\usepackage{accents}
\usepackage{soul}
\usepackage{cleveref}
\usepackage{orcidlink}
\usepackage{tikz}

\usetikzlibrary{shapes,arrows,calc}
\usetikzlibrary{positioning}
\tikzstyle{Pblock} = [rectangle, draw, very thick,text width=2cm, text centered, rounded corners, minimum height=1cm]
\tikzstyle{line} = [draw,-latex',very thick]
\tikzstyle{block} = [rectangle, draw, very thick,text width=1cm, text centered, rounded corners, minimum height=1cm]
\tikzstyle{circ} = [draw, very thick, text width=2cm, ellipse,minimum width =2.5cm,minimum height=1.75cm,align=center]
\tikzstyle{svect} = [draw,fill=white,diamond,aspect=2,scale=.8]
\tikzstyle{cvect} = [draw,fill=red!20,diamond,aspect=2,scale=.8]
\tikzstyle{ivect} = [draw,fill=red,diamond,aspect=2,scale=.8]
\tikzstyle{sc} = [draw,fill=white,circle]
\tikzstyle{mc} = [draw,fill=red!10,circle]
\tikzstyle{ac} = [draw,fill=orange!20,circle]
\tikzstyle{rc} = [draw,fill=blue!10,circle]

\newcommand{\model}{\{\eqref{eq:within_host},\eqref{eq:population_level}\}}

\newcommand{\modelsp}{\{\eqref{eq:within_host},\eqref{eq:population_level}\} }

\author{ }
\date{}
\begin{document}

\title{A multi-scale immuno-epidemiological behavioral model for influenza-like illness: connecting scales through symptom scores}
\author{Binod Pant \orcidlink{0000-0001-8619-1861}$^{a}$\footnote{Corresponding author: Email: {\it bpant@math.msstate.edu}}, Summer Atkins$^{b}$, Necibe Tuncer$^{c}$ and Hana M.\ Dobrovolny$^{d}$
\begin{flushleft}
{\it {\small $^{a}$  Department of Mathematics and Statistics, Mississippi State University, Mississippi State, MS 39762, USA}}\newline
{\it {\small $^{b}$  Department of Mathematics, The University of Alabama in Huntsville, Huntsville, AL, 35899, USA}}\newline
{\it {\small $^{c}$  Department of Mathematics and Statistics, Florida Atlantic University, Boca Raton 33431,
Florida, USA}}\newline
{\it {\small $^{d}$  Department of Physics \& Astronomy, Texas Christian University, Fort
Worth, TX, 76129, USA}}\newline
\end{flushleft}
\vspace{-1.5cm}
    }
  \date{}
\maketitle
\noindent
\vspace{1cm}
\begin{abstract}
For many infectious diseases, behavior change is not a population-level reaction to rising case counts, but a personal one, triggered by how sick an individual feels. Yet most models fail to capture how illness severity drives this response. Multi-scale models that link the immunological scale to the epidemiological scale seem well-suited for capturing this behavior. Most multi-scale models link within-host viral load to population-level transmission rates through linking functions that are a function of viral load. However, viral load is often misaligned with symptom severity and thus possibly ill-suited for the purpose of informing behavior at a population level. We present a novel multi-scale model in which symptomatic individuals dynamically modify their transmission based on systemic symptom scores (a composite measure of muscle ache, fatigue, headache, and feverishness informed by within-host immune dynamics) while asymptomatic individuals do not change behavior. We find that stronger illness-driven behavior change delays and lowers the epidemic peak.  
Moreover, the resulting epidemic trajectory remains qualitatively similar to that of a standard SEIR model without behavior change, across several distinct choices of linking functions connecting the two scales. This stands in contrast to models where behavior change is driven by population-level feedback (e.g., reported cases or deaths), which can produce qualitatively distinct signatures such as plateaus, shoulders, or an elongated epidemic decline. In the case when an exponential function of symptom score is used as a linking function, we show that the standard SEIR model can closely reproduce the total number of infectious individuals at a given point in time generated through the multi-scale level. Overall, this study suggests that behavior change driven by symptom severity can be present at the population level without leaving a clear, distinguishing trace in the epidemic curve.
\end{abstract}
{\it Keywords: Immuno-epidemiological model; Multi-scale modeling; Behavioral epidemiology; Symptom score; Linking function.}

\section{Introduction}
Emerging and reemerging infectious diseases have been documented to cause human behavior changes. For example, by examining population-level survey collected human behavior data, Urmi et al. \cite{urmi2024characterizing} showed that during the COVID-19 pandemic, changes in human behavior were correlated with changes in reported mortality and hospitalization. During the COVID-19 pandemic, widespread public mask usage was adopted in the United States for the first time since localized mask mandates were implemented during the 1918 influenza pandemic, though unlike 1918, mask wearing was not enforced through legal ordinances in some cities like San Francisco \cite{cnn_1918_masks_2020}.
Similarly, empirical and modeling studies have attributed the rapid decline of the 2022 mpox outbreak among men who have sex with men (MSM) to behavioral adaptations within affected communities \cite{prochazka2024temporary,maniscalco2024adaptive}.

This empirical evidence largely focuses on population-level observations, and perhaps for this reason, much of the behavioral epidemiology modeling literature has similarly focused on how populations react to population-level aggregate signals such as reported cases, hospitalizations, and deaths
\cite{pant2024behavior,pant2026paradox, kiss2010impact, oveson2025modeling,lejeune2024mathematical}.
However, human behavior change, especially in recurring respiratory diseases (such as flu and RSV), can also occur due to an individual getting sick: symptomatic individuals often alter their activities due to illness severity, unwillingness to infect others, or simple inability to maintain normal routines. In fact, McLean et al. \cite{mclean2017school} report that in the United States, influenza is responsible for 47\% of missed school days attributed to acute respiratory illness.
In contrast, asymptomatic individuals, who unless tested, do not know that they have the disease, tend not to modify their behavior (i.e., contact patterns) and unknowingly transmit diseases. Various modeling studies have shown that the major drivers of the COVID-19 pandemic were asymptomatic individuals (as opposed to symptomatic individuals) \cite{moghadas2020implications,ma2021global, pant2024hetro,dobrovolny20asymp}, largely because those who are symptomatically infectious tend to reduce their contact patterns.

 This illness-induced behavior change, while epidemiologically significant, remains understudied from a modeling perspective compared to {\it population-level models} that track behavior change driven by population-level information. {\it Immuno-epidemiological models} have been developed to connect the within-host and population scales, {but existing models of this kind largely represent the between-host transmission rate (a product of probability of transmission per contact, and the number of contacts per unit time) as a function of viral load \cite{gilchrist2002modeling,martcheva2015introduction}. Since these traditional multi-scale models do not explicitly mention behavior change, it can be deduced that viral load is implicitly being used to account for changes in infectiousness over the course of infection. In this study, we instead focus on how changes in the number of contacts can be informed by information generated by the within-host model. However, for the purpose of understanding illness-induced behavior change,} viral load may not be the most appropriate quantity to inform behavior: viral load dynamics are temporally misaligned with symptom onset \cite{he2020temporal,canini2011population,challenger2022modelling}, and symptom severity often reflects unregulated inflammatory responses rather than direct viral burden \cite{argyropoulos2020association}.



To address these gaps, we develop a novel immuno-epidemiology model that mechanistically links within-host viral dynamics and immune responses to population-level transmission patterns through {\it systemic symptom scores} (a composite measure of muscle ache, fatigue, headache, and feverish feeling) \cite{canini2011population}.
Our approach uses a nested modeling framework, {first introduced in \cite{gilchrist2002modeling}}, 
where a within-host immunology model tracks viral load and pro-inflammatory cytokine levels, among other responses, to generate dynamic symptom scores that directly inform behavioral changes in symptomatic individuals at the population level, while asymptomatic individuals maintain unchanged transmission rates. Unlike previous population-level compartmental approaches that split different stages of symptomatic individuals into different compartments (for example, ``mild,” ``moderate,” and ``severe” compartments), our model captures the continuous, time-varying relationship between individual immune-mediated symptom development and population-level transmission rate modification for symptomatic individuals. The proposed multi-scale framework is particularly important for diseases where symptomatic individuals can dynamically adjust their contact patterns based on illness severity, while asymptomatic individuals tend to maintain unchanged behavior despite potential transmission.



This study makes two interconnected contributions to multi-scale and behavioral epidemiological modeling:
\begin{enumerate}
    \renewcommand{\labelenumi}{(\alph{enumi})}
    \item We advance nested immuno-epidemiological modeling by linking within-host and population-level scales through a within-host model derived symptom score, which is better aligned to inform behavior change at a population-level than viral load.
    \item Through this novel immuno-epi model, we advance mechanistic modeling of behavior change during an outbreak through a distinct behavior-change pathway: transmission suppression driven by an individual's own infection-age clock, in contrast to typical population-level feedback (e.g., humans change behavior due to reported cases or deaths observed in the community) models in the literature \cite{pant2024behavior,pant2026paradox}.
\end{enumerate}

This study is organized as follows. In \Cref{sec2}, the immuno-epidemiology model is formulated. Specifically, in \Cref{sec2.1}, the within-host immunology model is presented, along with the local asymptotic stability result for its reproduction number. In \Cref{sec2.2}, the between-host transmission dynamics model is presented, along with the theoretical results on local asymptotic stability of its associated reproduction number and uniqueness of the endemic equilibrium. In \Cref{sec:sensitivity}, the impact of varying key model parameters is assessed. \Cref{sec:SEIR} qualitatively compares the outcomes of the multi-scale model to a standard SEIR model that does not account for behavior change and a modified SEIR model that explicitly does so. Finally, \Cref{sec: discussion} presents the discussion and conclusion.

\section{Model Formulation}\label{sec2}

We propose a {multiscale} model of infectious disease dynamics linking within-host viral dynamics and population-level transmission of the disease. The within-host model {(which is originally introduced in \cite{canini2011population}) tracks viral load, immune responses, and symptom {development, as described by a system of ordinary differential equations (ODEs)}. {We formulate an age-since-infection structured partial differential equation (PDE) model that incorporates the effects of symptom-based behavioral changes through the usage of linking functions. Specifically, symptomatic individuals in the population-level model change their behavior based on the severity of their symptoms, as inferred by the within-host model.

\subsection{Within-host dynamics}\label{sec2.1}

It is assumed that susceptible target cells ($T_S$) get infected by virus  ($V$) at a rate $
\phi$, upon which they become nonproductive but infected (eclipse) cells ($T_E$). These eclipse cells become productive infected cells ($T_I$) at a rate $\omega$. Productive infected cells die naturally at a rate $\delta_n$ or die due to natural killer cells ($K$) at a rate $\delta_k$. Virus titer is produced by productive infected cells at a rate $\frac{r}{1+\psi F}$, where cytokine-mediated inhibition reduces the baseline production rate $r$ in proportion to pro-inflammatory cytokine ($F$) levels, with $\psi$ representing the strength of this inhibitory effect.
Virus titer is cleared at a rate $c$. It is assumed that cytokines and natural killer cells production rates are $\nu_i$ and $
\nu_k$, respectively, while their clearance rates are  $\alpha$ and $
\eta$, respectively. Finally, it is assumed that the rate at which systemic symptoms ($Y$) appear is $\zeta$ while $\xi$ is the rate of symptom resolution. The resulting system of differential equations, adapted from \cite{canini2011population}, that describes the within-host dynamics is given by:



\begin{equation}\label{eq:within_host}
\left\{
\begin{aligned}
\frac{dT_S}{d\tau} &= -\phi T_S V, \\
\frac{dT_E}{d\tau} &= \phi T_S V - \omega T_E, \\
\frac{dT_I}{d\tau} &= \omega T_E - \delta_n T_I - \delta_k T_I K, \\
\frac{dV}{d\tau} &= \frac{r}{1 + \psi F} T_I - c V, \\
\frac{dF}{d\tau} &= \nu_i T_I - \alpha F, \\
\frac{dK}{d\tau} &= \nu_k F - \theta K, \\
\frac{dY}{d\tau} &= \zeta F - \xi Y,
\end{aligned}
\right.
\end{equation}



The within-host model \eqref{eq:within_host} has a line of {\it infection-free equilibria} (IFE), given by:
\begin{equation}\label{eq:IFEDFE}
\mathbb{E}_0 = (T_S,T_E,T_I,V, F, K, Y) = ({T}^\diamond_S,0,0,0,0,0,0),  
\end{equation}
where $T_S^\diamond$ is the number of target cells at the infection-free equilibria. However, in an immunological context, we take $T_S^\diamond$ to be the number of target cells before infection, i.e., $T_S^\diamond=T_S(0).$

Following the next generation operator method \cite{van2002reproduction,diekmann1990definition}, it follows that the associated non-negative matrix of new infection terms
($\mathbb{F}$) and the M-matrix of all linear transition terms ($\mathbb{V}$) are given, respectively, by 


\begin{equation*}
\mathbb{F} = \begin{bmatrix} 0 & 0 & \phi T_S^\diamond \\ 0 & 0 & 0 \\ 0 & r & 0 \end{bmatrix} \quad \text{and} \quad \mathbb{V} = \begin{bmatrix} \omega & 0 & 0 \\ -\omega & \delta_n & 0 \\ 0 & 0 & c \end{bmatrix}.
\end{equation*}

The {\it immunological reproduction number} ($\mathscr{R}_0$), which measures the number of secondary viral particles that one viral particle will produce in a population consisting entirely of susceptible target cells, is given by the spectral radius of $\mathbb{F}\mathbb{V}^{-1}$,

\begin{equation}\label{eq:R0withinhost}
\mathscr{R}_0 =  \underbrace{\Big(\phi T_S^\diamond\Big)}_{\text{infection rate at IFE}} \quad \times \underbrace{\left(\frac{1}{\delta_n}\right)}_{\substack{\text{average lifespan of} \\ \text{productive infected cells at IFE}}} \times \underbrace{r}_{\substack{\text{rate of virus production} \\ \text{per productive infected cells}}} \times \underbrace{\left(\frac{1}{c}\right).}_{\substack{\text{average lifespan} \\ \text{of virus titer}}}
\end{equation}



This immunological reproduction number ($\mathscr{R}_0$) is a threshold quantity. If $\mathscr{R}_0<1$, viral particles decay exponentially and the infection fails to establish. In fact, even when $\mathscr{R}_0>1$, the viral particles will eventually clear out (despite initial exponential growth) as the model is structured to describe a non-chronic disease. This eventual clearance of viral particles even when $\mathscr{R}_0>1$ occurs due to two biological mechanisms: (i) target cells (respiratory epithelial cells lining the airways) are not replenished during acute infection, as epithelial turnover occurs over 30-50 days \cite{crystal2008airway} while influenza-like illness typically resolves within 14 days, and (ii) immune-mediated reduction of viral production and viral elimination through intrinsic viral clearance, ensuring eventual elimination.

\begin{figure}[ht]
    \centering
\includegraphics[width=0.5\textwidth]{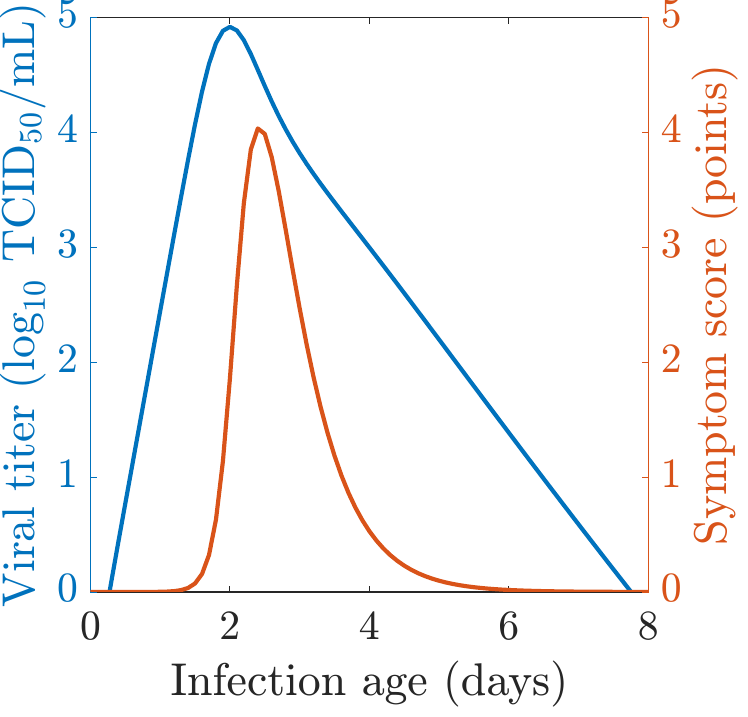}
\caption{Simulation of within-host model \eqref{eq:within_host} showing viral titer $\log_{10}(V(\tau))$ (blue, left axis) 
and symptom score $Y(\tau)$ (red, right axis) as functions of infection age $\tau$ 
(days since infection). This is a partial recreation of Figure 3 in \cite{canini2011population}. Parameter values used in the simulation are given in \Cref{tab:param_value}.}
    \label{fig:withinhost}
\end{figure}

\subsection{Population-level disease dynamics}\label{sec2.2}

We assume that the total population at time $t$, defined as $N(t)$, can be partitioned into compartments of susceptible, $S(t)$, exposed but not infectious, $E(t)$, symptomatically-infectious, ${I}(t)$, asymptomatically-infectious, $A(t)$, and recovered, $R(t)$, individuals, such that

$$N(t) = S(t) + E(t) + {I}(t) + A(t) + R(t),$$
where ${I}(t)=\int_0^\infty i(\tau,t) d\tau$ with $i(\tau,t)$ representing the density of symptomatically infectious individuals at time t with {infection age} $\tau$.



We assume that susceptible individuals acquire the disease at a rate $\lambda(\tau,t)$. After $1/\sigma$ duration, $\rho$ proportion of exposed but not yet infectious individuals became symptomatically infectious while the remaining $1-\rho$ became asymptomatically infectious. We assume that $i(\tau,t)$ recover at a rate $\gamma_i(\tau)$, while asymptomatic individuals recover at a rate $\gamma_A$ days. The function $\gamma_i(\tau)$ can be structured based upon the viral load concentration and the immune response of the host \cite{GulbudakRift,GulbudakDengue}. However, within our analyses we assume $\gamma_i(\tau)$ to be any arbitrary continuous differentiable function, {while for simulation purposes we assume it to be a fixed parameter. }  

We define the {\it force of infection}, denoted by $\lambda(t)$, as:

\begin{equation}\label{eq:foi}
 \lambda(t) = \beta \left[\dfrac{ \displaystyle\int_0^\infty \eta(Y(\tau)){i(\tau, t)} \, d\tau + A(t)}{N}\right],    
\end{equation}
where $\beta$ represents the baseline population-level transmission rate, defined as the product of the baseline contact rate and the baseline per-contact transmission probability. We define $\eta(Y(\tau))$, with $0 \leq \eta(Y(\tau)) \leq 1$, as a transmission modification function that captures how symptom severity affects the effective transmission rate of symptomatically infectious individuals. 
The transmission modification occurs through two distinct but coupled mechanisms: (i) symptomatic individuals reduce their contact patterns due to illness severity, social responsibility, or physical incapacity, and (ii) symptomatic individuals may adopt protective behaviors such as mask-wearing that reduce the per-contact transmission probability. For mathematical tractability, we model these effects jointly through the single function $\eta(Y(\tau))$, where decreasing values correspond to greater transmission reduction as symptom severity increases. Asymptomatic individuals maintain the baseline transmission rate $\beta$, since we assume they are unaware of their infectious status and thus do not modify their behavior patterns or adopt additional protective measures.

The initial number of individuals in state $X$ with $X = \{S,E,I,A,R\}$ is given by $X(0)$, with the initial total number of symptomatically infectious individuals expressed as $I(0) = \int_0^\infty i(\tau,0) d\tau$. The boundary condition $i(0,t) = \rho \sigma E(t)$ represents the flow of individuals transitioning from the exposed class to become newly symptomatically infectious at time $t$ (i.e., individuals with zero time since infection). The aggregate recovery rate of symptomatically infectious individuals is given by $\int_0^\infty \gamma_i(\tau)i(\tau,t) d\tau$.

Based on these assumptions, the population-level transmission dynamics of ILI are described by the following coupled ODE-PDE system:

\begin{equation}\label{eq:population_level}
\left\{
\begin{aligned}
\frac{dS}{dt} &= \Lambda -\beta \left( \displaystyle\int_0^\infty \eta(Y(\tau))i(\tau, t) \, d\tau + A(t) \right) \frac{S}{N} - \mu S, \\
\frac{dE}{dt} &= \beta \left( \displaystyle\int_0^\infty \eta(Y(\tau))i(\tau, t) \, d\tau + A(t) \right) \frac{S}{N} - (\sigma+\mu) E, \\
\frac{\partial i(\tau, t)}{\partial \tau} + \frac{\partial i(\tau,t)}{\partial t} &= -(\gamma_i(\tau)+\mu) i(\tau,t), \quad \quad  \quad  i(0,t) = \rho \sigma E(t) \\
\frac{dA}{dt} &= (1-\rho) \sigma E - (\gamma_A+\mu) A(t), \\
\frac{dR}{dt} &= \int_0^\infty \gamma_i(\tau) i(\tau, t) \, d\tau + \gamma_A A(t) - \mu R.
\end{aligned}
\right.
\end{equation}

\begin{table}[ht] 
  \begin{center}
    \begin{tikzpicture}[baseline]
      \node (table) {\normalsize
        \begin{tabular}{|c|p{11cm}|}
          \hline \hline
          \textbf{State Variable} & \textbf{Description}  \\
          \hline \hline
          $T_S(\tau)$ & Number of uninfected target epithelial cells at time $\tau$ since infection \\
          $T_E(\tau)$ & Number of nonproductive infected eclipse cells at time $\tau$ since infection \\
          $T_I(\tau)$ & Number of productive infected cells at time $\tau$ since infection \\
          $V(\tau)$ & Concentration of virus at time $\tau$ since infection \\
          $F(\tau)$ & Number of pro-inflammatory cytokines at time $\tau$ since infection \\
          $K(\tau)$ & Number of natural killer cells at time $\tau$ since infection \\
          $Y(\tau)$  & Systemic symptom score at time $\tau$ since infection \\
          \hline \hline
          $S(t)$ & Number of susceptible individuals at time $t$ \\
          $E(t)$ & Number of exposed individuals at time $t$ \\
          $i(\tau,t)$ & Density of symptomatically infectious individuals at time $t$ with age since infection $\tau$ \\
          $A(t)$ & Number of asymptomatically infectious individuals at time $t$ \\
          $R(t)$ & Number of recovered individuals at time $t$ \\
          \hline \hline
        \end{tabular}
      };
      \draw[thick, black, decorate, decoration={brace, amplitude=8pt, mirror}] 
        ([yshift=-21pt]table.north west) -- ([yshift=-105pt]table.north west) 
        node[midway, left=7pt] {\normalsize Within-host};
      \draw[thick, black, decorate, decoration={brace, amplitude=8pt, mirror}] 
        ([yshift=-107pt]table.north west) -- ([yshift=-180pt]table.north west) 
        node[midway, left=7pt] {\normalsize Population-level};
    \end{tikzpicture}
    \caption{Description of the state variables in the immuno-epidemiology behavior model \model.}
    \label{tab:state_variables}
  \end{center}
\end{table}

\begin{figure}
\centering
\resizebox{0.9\textwidth}{!}{
\begin{tikzpicture}[node distance=4cm, auto]
\node[circ] (s) at (0,0) {}; 
\node[circ] (m) at (5,0) {}; 
\node[circ] (a) at (10, 1.5) {}; 
\node[circ] (i) at (10, -1.5) {}; 
\node[circ] (r) at (15, 0) {}; 
\node (stext) at (0,.5) {$S(t)$};
\node (etext) at (5,.5) {$E(t)$};
\node (atext) at (10, 2) {$A(t)$};
\node (itext) at (10, -1) { $ i(\tau,t)$};
\node (rtext) at (15, 0.5) {$R(t)$};
   \node[sc] (s1) at (-.5,0) {};
   \node[sc] (s2) at (.35,0) {};
   \node[sc] (s3) at (-.5,-.5) {};
   \node[sc] (s4) at (.35,-.5) {};
   \node[mc] (m1) at (4.5,0) {};
   \node[mc] (m2) at (5.35,0) {};
   \node[mc] (m3) at (4.5,-.5){};
   \node[mc] (m4) at (5.35,-.5){};
   \node[rc] (r1) at (14.5,0) {};
   \node[rc] (r2) at (15.35,0) {};
   \node[rc] (r3) at (14.5,-.5) {};
   \node[rc] (r4) at (15.35,-.5) {};
   \node[ac] (a1) at (9.5,1.5) {};
   \node[ac] (a2) at (10.35,1.5) {};
   \node[ac] (a3) at (9.5, 1) {};
   \node[ac] (a5) at (10.35, 1) {};
   \node[draw,fill=red!10,circle] (i1) at (9.5,-1.5) {};
   \node[draw,fill=red!20,circle] (i2) at (10.35,-1.5) {};
   \node[draw,fill=red!60,circle] (i3) at (9.5,-2) {};
  \node[draw,fill=red,circle] (i4) at (10.35,-2) {};
    \draw [line] (s.east) -- (m.west) node [midway, above] {$\lambda(t)$};
    \draw [line] (m.east) -- (a.west) node [midway, left] {$(1-\rho)\sigma$};
    \draw [line] (m.east) -- (i.west) node [midway, left]  {$\rho \sigma$};
    \draw [line] (i.east) -- (r.west) node [midway, left]  {$\gamma_i$};
    \draw [line] (a.east) -- (r.west) node [midway, left]  {$\gamma_A$};
  \node[draw,circle, minimum size=4cm] (whd) at (6.5,-6) {};
   \draw (tangent cs:node=i3,point={(whd.west)},solution=2) -- (tangent cs:node=whd,point={(i3.west)});
   \draw (tangent cs:node=i3,point={(whd.east)},solution=1) -- (tangent cs:node=whd,point={(i3.east)},solution=2);
   \node[draw=green,fill=green,regular polygon,regular polygon sides=3,align=center] (a) at ([yshift=-.75cm]whd.north){};
   \node[draw=yellow,fill=yellow,regular polygon,regular polygon sides=3,align=center] (ia) at ([xshift=-.5cm,yshift=-.75cm]a.south){};
   \node[draw=red,fill=red,regular polygon,regular polygon sides=3,align=center] (tactive) at ([xshift=.5cm,yshift=-.75cm]a.south){};
    \node[draw=cyan,fill=cyan,regular polygon,regular polygon sides=4,align=center] (killer) at ([xshift=-1 cm,yshift=-.75cm]tactive.south){};
\node[draw=blue,fill=blue,regular polygon,regular polygon sides=4,align=center] (Fheal) at ([xshift=0 cm,yshift=-.75cm]tactive.south){};
   \node[draw = red,fill=red, ellipse,star,star points=10, align=center] (p) at ([xshift= -0.1cm,yshift =-0.75cm]killer.south){};
   \node[draw = orange,fill=orange, ellipse,star,star points=7, align=center] (symp) at ([xshift= 1cm,yshift =-0.75cm]killer.south){};
   \node[draw=none,fill=none,left of=a,xshift=-.55cm] (atext) {\footnotesize Susceptible Target Cells  $T_{S}(\tau)$};
   \node[draw=none,fill=none,left of=ia,xshift=-.5cm] (iatext) {\footnotesize  Eclipse Target Cell $T_{E}(\tau)$};
   \node[draw=none,fill=none,right of=tactive] (infecttext) {\footnotesize Infected Target Cells $T_{I}(\tau)$};
   \node[draw=none,fill=none,left of=p,xshift=0.8cm] (ptext) {\footnotesize Viral Load $V(\tau)$};
   \node[draw=none,fill=none,left of=killer,xshift=0 cm] (killertext) {\footnotesize Natural Killer Cells $K(\tau)$};
   \node[draw=none,fill=none,right of=Fheal,xshift=0.5 cm] (Fhealtext) {\footnotesize Pro-inflammatory Cytokines $F(\tau)$};
   \node[draw=none,fill=none,right of=symp,xshift=-0.5 cm] (symptext) {\footnotesize Symptom Score $Y(\tau)$};
  \draw[-latex'] (atext.east) -- (a.west);
   \draw[-latex'] (iatext.east) -- (ia.west);
   \draw[-latex'] (ptext.east) -- (p.west);
   \draw[-latex'] (killertext.east) -- (killer.west);
   \draw[-latex'] (infecttext.west) --(tactive.east);
   \draw[-latex'] (Fhealtext.west) -- (Fheal.east); 
   \draw[-latex'] (symptext.west) -- (symp.east);
\end{tikzpicture}  
}
\caption{Flowchart of immuno-epidemiology behavior model \model. Arrows depicting birth and death {in the population-level model} are omitted for simplicity. {The circled region is a schematic of the within-host model that depicts that each symptomatic individual in $i(\tau,t)$ has an associated within-host dynamics that evolve with time since infection $\tau$. In contrast, asymptomatic individuals ($A$) are not assigned within-host dynamics.} Description of state variables and parameters of the model are given in Tables \ref{tab:state_variables} and \ref{tab:param_value}, respectively. The expression for force of infection $\lambda(t)$ is given by \eqref{eq:foi}.}
\end{figure}

\begin{table}[ht]
  \begin{center}
    \small
    \begin{tabular}{|c|c|p{6.5cm}|c|c|}
      \hline \hline
      & \textbf{Parameter} & \textbf{Description} & \textbf{Value} & \textbf{Source} \\
      \hline \hline
      \multirow{13}{*}{\rotatebox{90}{Within-host}}
      & \(\phi\) & Infection rate of target cells & $3.0 \times 10^{-5}$ (TCID$_{50}$/ml)$^{-1}$ day$^{-1}$ & \cite{canini2011population} \\
      & \(\omega\) & Transition rate from \(T_E\) to \(T_I\) & 2.8 day$^{-1}$ & \cite{canini2011population} \\
      & \(\delta_n\) & Mortality rate of \(T_I\) & 1.56 day$^{-1}$ & \cite{canini2011population} \\
      & \(\delta_k\) & K-mediated killing rate & $3.8 \times 10^{-6}$ ([K]$\times$day)$^{-1}$ & \cite{canini2011population} \\
      & \(r\) & Rate of virus production by \(T_I\) & 0.043 [V]$\times([T_I]\times$day$)^{-1}$ & \cite{canini2011population} \\
      & \(\psi\) & Effect of viral growth due to the presence of cytokines \(F\) & $2.4 \times 10^{-7}$ {[F]$^{-1}$} & \cite{canini2011population} \\
      & \(c\) & Virus clearance rate & 7.1 day$^{-1}$ & \cite{canini2011population} \\
      & $\nu_{i}$ & Activation of cytokine production due to the presence of \(T_I\) & 1 [F]$\times([T_I]\times$day$)^{-1}$ & \cite{canini2011population} \\
      & \(\alpha\) & Cytokine clearance rate & 1.82 day$^{-1}$ & \cite{canini2011population} \\
      & \(\nu_{{k}}\) & Activation of killer cells & 1 [K]$\times([F]\times$day$)^{-1}$ & \cite{canini2011population} \\
      & \(\theta\) & $K$ cell clearance rate & 0.061 day$^{-1}$ & \cite{canini2011population} \\
      & \(\zeta\) & Symptom activation rate & $2.2 \times 10^{-6}$ symptom score points [F]$^{-1}$ & \cite{canini2011population} \\
      & \(\xi\) & Symptom resolution rate & 6.1 day$^{-1}$ & \cite{canini2011population} \\
      \hline \hline
      \multirow{9}{*}{\rotatebox{90}{Population-level}}
      & $\Lambda$ & Recruitment rate of susceptible individuals & 3.5215 day$^{-1}$ & Assumed \\
      & \(\beta\) & Baseline population-level infection rate & 0.4 day$^{-1}$ & Assumed \\
      & \(\rho\) & Fraction of exposed individuals who become symptomatic & 0.8 (dimensionless) & Assumed \\
      & \(\sigma\) & {Transition rate from E to $i$ or $A$} &  $1/1.6$ day$^{-1}$ & {\cite{cori2012estimating}} \\
      & \(\gamma_i(\tau), \gamma_A\) & Recovery rates for \(i\) and \(A\) & $\gamma_i=\gamma_A=1/5$ day$^{-1}$ & Assumed \\
      & $\mu$ & Natural mortality rate of individuals & $1/(77.8\times365)$ day$^{-1}$ & \cite{pant2024hetro} \\
      & $\eta(\tau)$ & Reduction in effective transmission (relative to baseline $\beta$) due to symptom-driven contact/behavior change & --- & --- \\
      \hline \hline
    \end{tabular}
    \caption{Description, values, and sources of the parameters in the immuno-epidemiology behavior model \model. The within-host model \eqref{eq:within_host} is initialized with $T_S(0)=4\times10^8$ target cells, $V(0)=0.5$ TCID$_{50}$/ml of virus, and all remaining state variables ($T_E,T_I,F,K,Y$) set to zero \cite{canini2011population}. The between-host model is initialized with $N(0)=100{,}000$, $S(0)=99{,}995$, $E(0)=5$, and $I(0)=A(0)=R(0)=0$.}
    \label{tab:param_value}
  \end{center}
\end{table}

\subsection{Linking functions connecting the two scales}\label{sec:linking}

The connection between within-host and between-host dynamics in immuno-epidemiological 
models is typically achieved through a {\it linking function} that specifies how within-host 
state variables influence population-level parameter.
Following the nested approach of Gilchrist and Sasaki \cite{gilchrist2002modeling}, 
most immuno-epidemiological models assume that the population-level transmission rate 
is a function of viral load $V(\tau)$ at the within-host scale \cite{martcheva2015introduction,gilchrist2002modeling}. Common functional forms for this viral-load-based linking include 
linear \cite{gilchrist2002modeling,saldana2025multiscale, martcheva2015introduction}, exponential \cite{saldana2025multiscale, martcheva2015introduction}, and Hill functions  \cite{fraser2007variation,atkins2023immuno,saldana2025multiscale, martcheva2015introduction}.

The connection between viral load and transmission 
for influenza-like illness is, however, non-trivial since viral load dynamics often misalign with 
symptom severity \cite{canini2011population,challenger2022modelling,he2020temporal,argyropoulos2020association}.
Furthermore, symptomatic individuals are known to modify their behavior when ill, 
reducing contact patterns due to illness severity, unwillingness to infect others, 
or simple inability to maintain normal routines \cite{mclean2017school}.
{Focused on only understanding symptom driven behavior change, we propose} an alternative linking paradigm where the transmission modification 
function for symptomatic individuals $\eta$ depends on systemic symptom scores $Y(\tau)$ generated by the 
within-host immune dynamics model \eqref{eq:within_host}.
%
For simplicity, we connect the two scales using a standard exponential decay linking function that decays with increasing symptom score and takes the following form \cite{saldana2025multiscale, martcheva2015introduction}:
\begin{equation}\label{eq:exponential}
    \eta(Y) = e^{-a Y},
\end{equation}
where parameter $a>0$ controls the rate at which increasing symptom scores 
reduce contact rates. Henceforth we define parameter $a$ as the {\it behavior change parameter}. We later examine other commonly used linking functions to assess whether this choice qualitatively affects our results (see Section \ref{sec:SEIR}).


Apart from the transmission modification function $\eta(\tau)$, the recovery rate of 
symptomatically infectious individuals $\gamma_i(\tau)$ in the population-level model \eqref{eq:population_level} is also informed by the within-host scale. For simplicity we assume that $\gamma_i(\tau)$ is a fixed parameter, i.e.,  $\gamma_i(\tau)= \gamma_i$ is fixed. 

\subsection{Equilibria and threshold analysis of the population-level model}
A unique {\it disease-free equilibrium} (DFE) of the epidemiological model \eqref{eq:population_level}, obtained by setting 
derivatives with respect to $t$ in \eqref{eq:population_level} equal to zero, is given by:

\begin{equation}\label{eq:DFEpop}
\mathcal{E}_0 = ({S}^*,{E}^*,{i}^*(\tau),{A}^*,{R}^*)=\left(\dfrac{\Lambda}{\mu},0,0,0,0\right).
\end{equation}

We remark that the total number of symptomatically infectious individuals at DFE ($I^*$) is given by: ${I}^*=\int_0^\infty i^*(\tau) d\tau$.

The {\it {control} reproduction number} ($\mathcal{R}_C$), defined as the average number of new cases generated by a typical infectious individual {(in the presence of the symptom-driven behavioral change)} when introduced to a population almost entirely consisting of a susceptible population, 
of the epidemiological model \eqref{eq:population_level} can be expressed as the sum of two constituent reproduction numbers:

\begin{equation}\label{eq:R0pop}
\mathcal{R}_C = \mathcal{R}_{C,I} + \mathcal{R}_{C,A},
\end{equation}
where

\begin{equation}\label{eq:R_cons}
\begin{aligned}
\mathcal{R}_{C,I} &=
{\int_0^\infty 
\overbrace{\beta \eta(Y(\tau)) \frac{{S}^*}{{N}^*}}^{\substack{\text{Effective initial} \\ \text{transmission rate by} \\ \text{ individuals with infection age $\tau$}}} 
\overbrace{\exp\left( - \int_0^\tau \big(\gamma_I(({\widehat{\tau}})) + \mu\big) \,d{\widehat{\tau}} \right)}^{\substack{\text{Probability of remaining} \\ \text{infectious at time } \tau}}d\tau}_{\substack{\text{} \\ \text{}}} 
\quad \times 
\underbrace{\frac{\rho\sigma}{\sigma + \mu}}_{\substack{\text{Probability of surviving} \\ \text{exposed class and becoming} \\ \text{symptomatically-infectious}}},\\
\mathcal{R}_{C,A} &=
\underbrace{\beta \frac{{S}^*}{{N}^*}}_{\substack{\text{Effective initial} \\ \text{transmission rate by}\\\text{asymptomatic individuals}}} \times
\underbrace{\frac{(1 - \rho)\sigma}{\sigma + \mu}}_{\substack{\text{Probability of surviving} \\ \text{exposed class and becoming}\\\text{asymptomatically-infectious}}} \times
\underbrace{\frac{1}{\gamma_A + \mu}.}_{\substack{\text{Average duration} \\ \text{in asymptomatic class}}}
\end{aligned}
\end{equation}



The constituent control reproduction numbers $\mathcal{R}_{C,I}$ and $\mathcal{R}_{C,A}$ are the number of secondary cases generated respectively by a symptomatically-infectious and an asymptomatically-infectious individual when introduced to a population consisting of almost entirely susceptible population. It is convenient to define
\begin{equation}
\pi(\tau)=\exp{\left(-\int_0^\tau\big(\gamma_i({{\widehat{\tau}}})+\mu\big) d{\widehat{\tau}}\right)},
\end{equation}

where $\pi(\tau)$ is the {survival function representing the} probability of a symptomatic infectious individual remaining infectious $\tau$ time units after infection. Furthermore, we define 
\begin{equation}
    \Pi = \int_0^\infty \pi(\tau) \,d\tau,
\end{equation}
where $\Pi$ is the average duration of symptomatic infection.
Thus, in compact form the $\mathcal{R}_C$ can be written as:
\begin{equation}
\frac{\beta}{\sigma+\mu}\left[\rho\sigma\int_0^\infty\eta(\tau)\pi(\tau)d\tau+\frac{(1-\rho)\sigma}{\gamma_A+\mu}\right].
\end{equation}

{The \textit{basic reproduction number} ($\mathcal{R}_0$), defined as the average number of new
cases generated by a typical infectious individual when introduced to a population almost
entirely consisting of a susceptible population, in the \textit{absence} of symptom-driven
behavioral modification of transmission (i.e., $\eta(\tau) = 1$), of the model \modelsp is given by}
\begin{equation*}
    \mathcal{R}_0 = \mathcal{R}_C\big|_{\eta = 1}.
\end{equation*}
{That is, $\mathcal{R}_0$ recovers the reproduction number of the corresponding
intervention-free model in which symptomatic individuals do not modify their contact
patterns.}
\vspace{0.5cm}

\begin{theorem}\label{thm:laspop}
The disease-free equilibrium (DFE) $\mathcal{E}_0$  is locally asymptotically stable (LAS) if $\mathcal{R}_C < 1$ and unstable if $\mathcal{R}_C > 1$.
\end{theorem}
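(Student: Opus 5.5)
We linearize \eqref{eq:population_level} about the DFE $\mathcal{E}_0$ and derive a characteristic equation by looking for exponential solutions. Since $N$ is constant on the linearization at leading order and $S^*/N^*=1$, we write $S=S^*+x(t)$, $E=y(t)$, $i=w(\tau,t)$, $A=z(t)$, $R=q(t)$ and drop higher-order terms. This gives
\[
x'=-\beta\Big(\int_0^\infty \eta(\tau)w\,d\tau+z\Big)-\mu x,\qquad
y'=\beta\Big(\int_0^\infty \eta(\tau)w\,d\tau+z\Big)-(\sigma+\mu)y,
\]
\[
w_\tau+w_t=-(\gamma_i(\tau)+\mu)w,\quad w(0,t)=\rho\sigma y(t),\qquad z'=(1-\rho)\sigma y-(\gamma_A+\mu)z .
\]
The $x$ and $q$ equations decouple, and each contributes only the eigenvalue $-\mu<0$. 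Stability is therefore decided by the $(y,w,z)$ subsystem.

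We substitute $y=\bar y e^{\lambda t}$, $w=\bar w(\tau)e^{\lambda t}$ and $z=\bar z e^{\lambda t}$. Solving the transport ODE gives $\bar w(\tau)=\rho\sigma\bar y\,e^{-\lambda\tau}\pi(\tau)$, and the $A$-equation gives $\bar z=(1-\rho)\sigma\bar y/(\lambda+\gamma_A+\mu)$. Inserting both into the $E$-equation and dividing by $\bar y\neq0$ yields the characteristic equation
\[
1=G(\lambda):=\frac{\beta}{\lambda+\sigma+\mu}\left[\rho\sigma\int_0^\infty \eta(\tau)\pi(\tau)e^{-\lambda\tau}\,d\tau+\frac{(1-\rho)\sigma}{\lambda+\gamma_A+\mu}\right].
\]
By construction $G(0)=\mathcal{R}_C$. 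We would note, following the standard semigroup framework for age-structured models (Webb, or Martcheva's text), that the essential growth bound of the linearized semigroup is at most $-\mu<0$: the age part is a transport with loss rate at least $\mu$, and the boundary and integral terms are compact perturbations. Consequently, stability of $\mathcal{E}_0$ is determined by the roots of $G(\lambda)=1$. Rigorously justifying this reduction (the linearized stability principle for the ODE--PDE system) is the main technical obstacle. We would cite the standard result rather than reprove it, assuming $\eta$ is bounded and measurable, which holds because $0\le\eta\le1$.

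For the case $\mathcal{R}_C<1$, suppose a root has $\operatorname{Re}\lambda\ge0$. Then $|\lambda+\sigma+\mu|\ge\sigma+\mu$, $|\lambda+\gamma_A+\mu|\ge\gamma_A+\mu$, and $|e^{-\lambda\tau}|\le1$. Since $\eta,\pi\ge0$, this gives $|G(\lambda)|\le G(0)=\mathcal{R}_C<1$, which contradicts $G(\lambda)=1$. So every root has negative real part. A slightly stronger argument keeps the roots uniformly away from the imaginary axis, because $|G(\lambda)|\to0$ as $|\lambda|\to\infty$ in the half-plane $\operatorname{Re}\lambda>-\mu$. This gives local asymptotic stability.

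For the case $\mathcal{R}_C>1$, restrict $G$ to real $\lambda\ge0$. There $G$ is continuous and strictly decreasing: each factor is positive and decreasing, and $\int\eta\pi e^{-\lambda\tau}d\tau$ is decreasing by dominated convergence. Moreover $G(\lambda)\to0$ as $\lambda\to\infty$, while $G(0)=\mathcal{R}_C>1$. The intermediate value theorem then produces a real root $\lambda^*>0$, so $\mathcal{E}_0$ is unstable.
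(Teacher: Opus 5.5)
Your proposal is correct and follows the paper's proof essentially step for step. It uses the same linearization about $\mathcal{E}_0$ and the same characteristic equation $G(\lambda)=1$ with $G(0)=\mathcal{R}_C$. For $\mathcal{R}_C<1$ it uses the bound $|G(\lambda)|\le \mathcal{R}_C$ on $\mathrm{Re}\,\lambda\ge 0$, and for $\mathcal{R}_C>1$ it uses monotonicity of $G$ on $[0,\infty)$ plus the intermediate value theorem. Your added remarks (the decoupled $S$ and $R$ perturbations contributing only $-\mu$, and the appeal to the essential growth bound and linearized stability principle) supply rigor the paper leaves implicit, but do not change the route.
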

The proof of \Cref{thm:laspop} is given in \Cref{app:LAS}.

\vspace{1mm}

\begin{theorem}\label{thm:endemicUNIQUE}
    If $\mathcal{R}_C>1$, then the population-level model \eqref{eq:population_level} has a unique endemic equilibrium. 
\end{theorem}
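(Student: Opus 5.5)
To prove \Cref{thm:endemicUNIQUE}, we look for time-independent solutions $(S^\dagger,E^\dagger,i^\dagger(\tau),A^\dagger,R^\dagger)$ of \eqref{eq:population_level} with $E^\dagger>0$, and reduce the full system to a single scalar equation for the equilibrium force of infection. First we establish that at equilibrium the total population is constant, $N^\dagger=\Lambda/\mu$. Summing the equations of \eqref{eq:population_level} (integrating the PDE over $\tau$, using $i(0,t)=\rho\sigma E$ and $i(\tau,t)\to 0$ as $\tau\to\infty$) gives $N'=\Lambda-\mu N$, so $N^\dagger=\Lambda/\mu=S^*$. Next we solve the PDE at steady state: $\frac{d i^\dagger}{d\tau}=-(\gamma_i(\tau)+\mu)i^\dagger$ with $i^\dagger(0)=\rho\sigma E^\dagger$ gives $i^\dagger(\tau)=\rho\sigma E^\dagger\pi(\tau)$. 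The $A$-equation gives $A^\dagger=\frac{(1-\rho)\sigma}{\gamma_A+\mu}E^\dagger$, and $R^\dagger$ follows from the last equation once $E^\dagger$ is known.

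We then define the equilibrium force of infection
\[
\lambda^\dagger=\frac{\beta}{N^\dagger}\left(\int_0^\infty \eta(Y(\tau))i^\dagger(\tau)\,d\tau+A^\dagger\right)=\frac{\beta E^\dagger}{N^\dagger}\left[\rho\sigma\int_0^\infty\eta\pi\,d\tau+\frac{(1-\rho)\sigma}{\gamma_A+\mu}\right]=\frac{(\sigma+\mu)\mathcal{R}_C}{N^\dagger}\,E^\dagger .
\]
The $S$- and $E$-equations give $S^\dagger=\Lambda/(\lambda^\dagger+\mu)$ and $(\sigma+\mu)E^\dagger=\lambda^\dagger S^\dagger=\lambda^\dagger\Lambda/(\lambda^\dagger+\mu)$. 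Substituting $E^\dagger$ into the expression for $\lambda^\dagger$ and dividing by $\lambda^\dagger\neq 0$ yields
\[
1=\frac{\mathcal{R}_C\,\Lambda}{N^\dagger(\lambda^\dagger+\mu)}=\frac{\mathcal{R}_C\,\mu}{\lambda^\dagger+\mu},
\qquad\text{so}\qquad \lambda^\dagger=\mu(\mathcal{R}_C-1).
\]
This linear equation has exactly one root, and the root is positive if and only if $\mathcal{R}_C>1$. All other components are then determined uniquely and are positive: $S^\dagger=\Lambda/(\mu\mathcal{R}_C)$, $E^\dagger=\lambda^\dagger S^\dagger/(\sigma+\mu)$, $i^\dagger$, $A^\dagger$ and $R^\dagger$ as above.

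The main obstacle is justifying $N^\dagger=\Lambda/\mu$ rigorously, since the force of infection uses $N$ rather than $S^*/N^*$. We must confirm that the integral and boundary terms in the PDE balance exactly at equilibrium; this requires $\pi\in L^1$, i.e.\ $\Pi<\infty$, which holds because $\mu>0$. Alternatively, $N^\dagger$ can be computed directly by summing the equilibrium components, which avoids the integration-by-parts step. Finally, we note that $0\le\eta\le 1$ ensures $\int_0^\infty\eta\pi\,d\tau\le\Pi<\infty$, so $\mathcal{R}_C$ is finite and all the quantities above are well defined.
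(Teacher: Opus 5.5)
Your proposal is correct and follows essentially the same route as the paper. Both arguments establish $N^\dagger=\Lambda/\mu$ by summing the equations, express $i^\dagger$, $A^\dagger$ and $R^\dagger$ in terms of $E^\dagger$, and reduce the $S$- and $E$-equations to a single linear condition. The only difference is cosmetic: you close the system through the equilibrium force of infection, obtaining $\lambda^\dagger=\mu(\mathcal{R}_C-1)$. The paper instead divides the $E$-equation by $E^*$ to get $S^*=N^*/\mathcal{R}_C$ directly, and then reads off $E^*=\frac{\Lambda}{\sigma+\mu}\bigl(1-\frac{1}{\mathcal{R}_C}\bigr)$ from the $S$-equation.
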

The proof of \Cref{thm:endemicUNIQUE} is given in \Cref{app:endemicunique}.


\section{Effects of Behavioral and Symptom Severity Parameters on Outbreak Dynamics}\label{sec:sensitivity}

The novelty of this study lies in inducing behavior change through a within-host-derived symptom score, so our
simulations focus on the parameters governing this pathway rather than revisiting within-host dynamics in
isolation or standard symptomatic/asymptomatic SEIR extensions, both of which are already well studied
elsewhere (see, e.g., Canini and Carrat \cite{canini2011population} for the former, and \cite{gumel2021primer,pant2024hetro} for the latter).
Specifically, we examine the behavior change parameter $a$, which governs the strength of the contact-reduction
function $\eta(\tau)$, and the symptom-ramping parameter $\zeta$, which governs how sharply the within-host
model translates immune activity into symptom severity and, in turn, into behavioral response. 
Because behavior change in our model acts exclusively on symptomatic individuals, its population-level impact
is inherently mediated by how many infections become symptomatic in the first place. We therefore also
examine how varying the proportion of exposed individuals who become symptomatic, $\rho$, impacts disease
dynamics.

The numerical scheme used to solve the multiscale ODE-PDE model \modelsp is given in Appendix
\ref{numerical_scheme}. Unless otherwise stated in the caption of a figure, the parameters used in the
simulations presented in this study are tabulated in \Cref{tab:param_value}. For simplicity, we assume that
the average recovery time is the same for both symptomatic and asymptomatic individuals.

\Cref{fig:eta_total_infectious} shows how increasing the behavior change parameter $a$ shapes the number of  infectious
individuals ($I+A$). As parameter $a$ increases, symptomatic individuals reduce contact more sharply (\Cref{fig:eta_total_infectious}(a)), which translates into a delayed, flattened epidemic peak at the
population level (\Cref{fig:eta_total_infectious}(b)).

\Cref{fig:zeta_eta_total} instead varies the symptom-ramping parameter $\zeta$, which controls how sharply the
within-host model translates immune activity into symptom severity. Larger $\zeta$ produces higher, more
pronounced symptom scores (\Cref{fig:zeta_eta_total}(a)). Higher symptom trajectory drives a
correspondingly deeper reduction in contact (\Cref{fig:zeta_eta_total}(b)), and in turn a smaller, delayed
epidemic peak (\Cref{fig:zeta_eta_total}(c)). Together, Figures \ref{fig:eta_total_infectious} and \ref{fig:zeta_eta_total} show that behavioral suppression
of transmission can be driven by a stronger behavioral response ($a$) or by more severe symptoms
($\zeta$), acting through the same underlying mechanism: reduced contact during the symptomatic period.

\Cref{fig:rho} examines the impact of change in the proportion of exposed individuals who become symptomatic ($\rho$) on disease dynamics. Increasing $\rho$ modestly delays and lowers the epidemic peak, as a larger share of infections shift
into the behaviorally-suppressed symptomatic class rather than the asymptomatic class where individuals do not change behavior
(\Cref{fig:rho}(a)). {Correspondingly, as $\rho \in (0,1)$ increases, the symptomatic contribution to the control reproduction number increases relative to that of asymptomatic individuals (\Cref{fig:rho}(b)).}
{Notably, however, the overall reproduction number $\mathcal{R}_C$ does not change dramatically across this wide range of $\rho$, suggesting that
symptom-driven behavior change alone, may have a comparatively modest effect
on the population's threshold transmission potential.} 

\begin{figure}[ht]
    \centering
\includegraphics[width=1\textwidth]{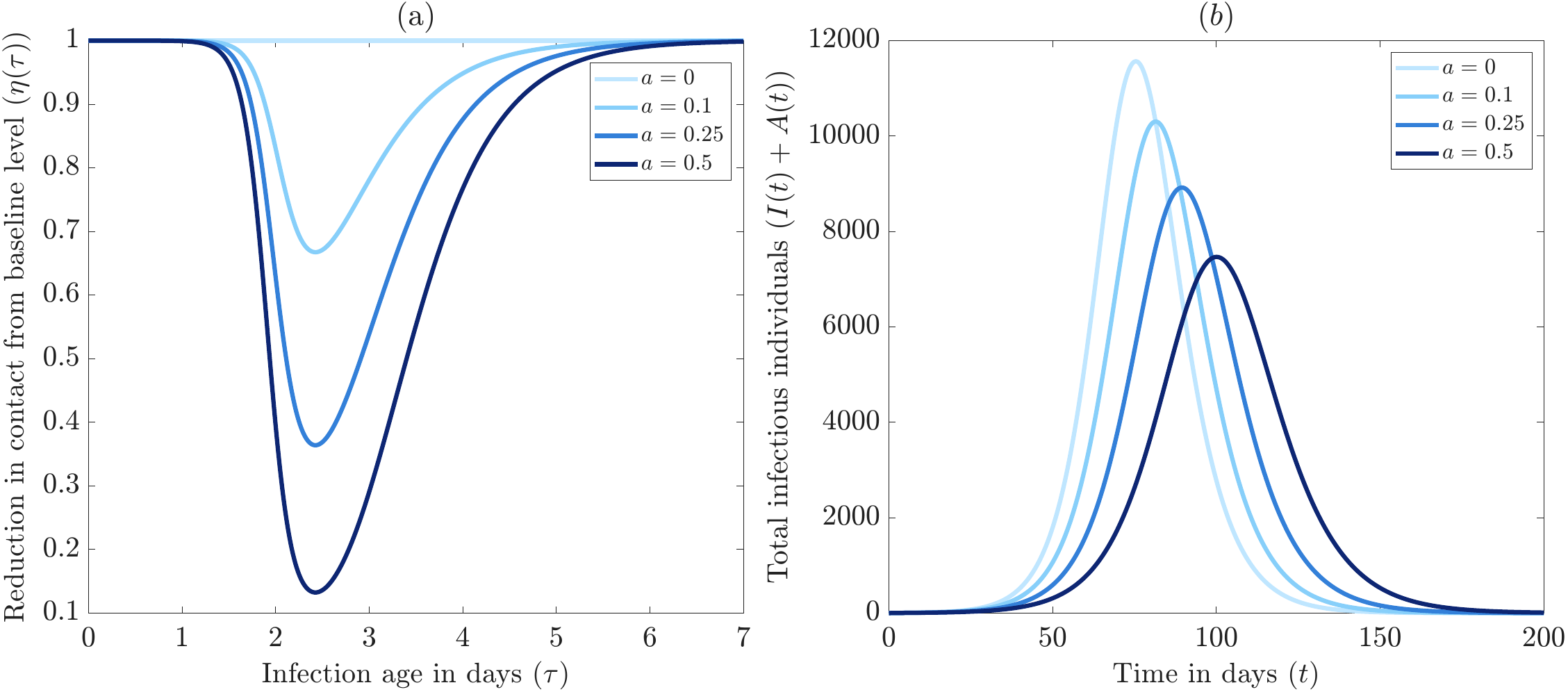}
\caption{Effect of the symptom-driven reduction in contact $\eta(\tau) = e^{-aY(\tau)}$ on epidemic dynamics, for varying values of the behavior change parameter $a$. \textbf{(a)} $\eta(\tau)$ as a function of infection age $\tau$, showing how larger $a$ produces a deeper and more prolonged reduction in contact during the symptomatic period. \textbf{(b)} Corresponding population-level impact on 
the total infectious population, $I(t)+A(t)$, over time $t$.
Increasing $a$ delays and flattens the epidemic peak, illustrating how stronger behavioral modulation to one's symptoms at the individual level dampens outbreak severity at the population level. Parameter values used in the simulation are given in Table~\ref{tab:param_value}.}
\label{fig:eta_total_infectious}
\end{figure}

\begin{figure}[ht]
    \centering
    \includegraphics[width=1\textwidth]{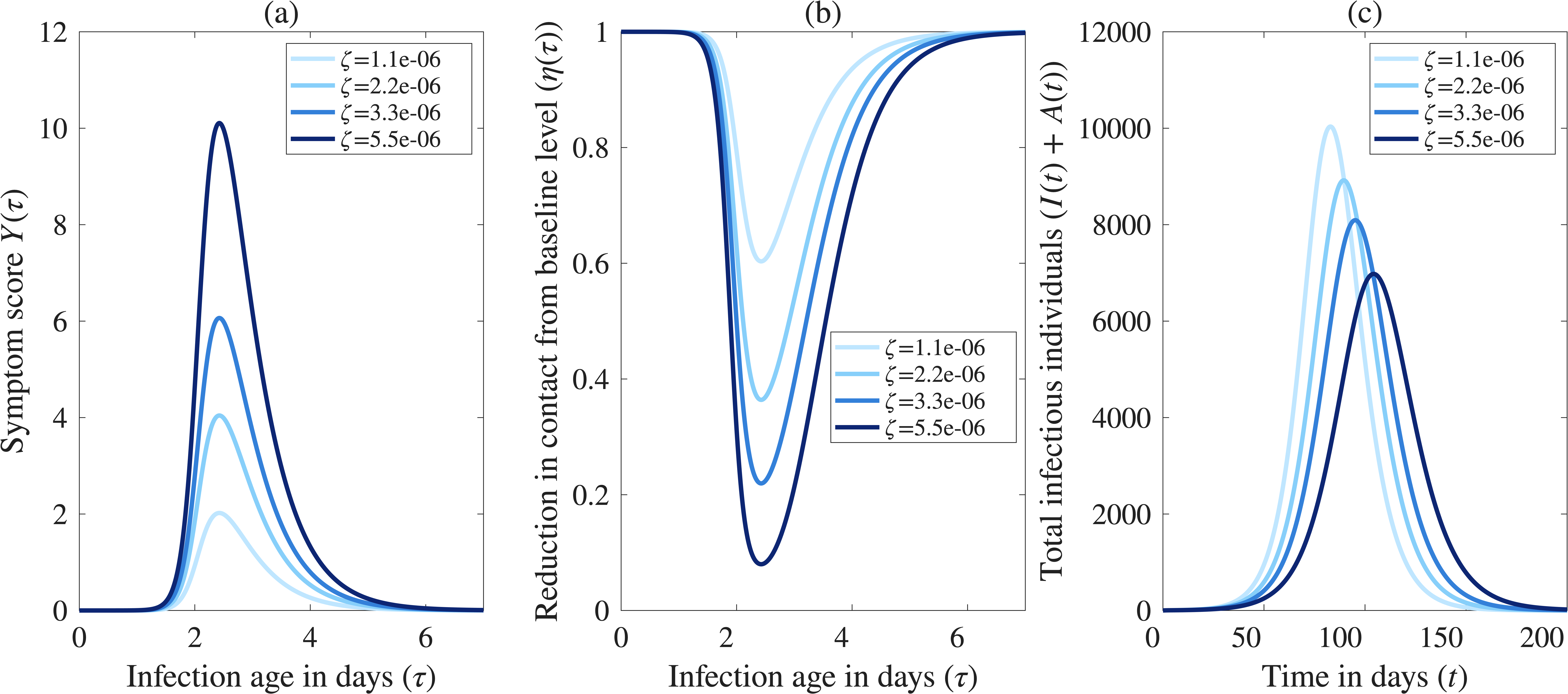}
\caption{Effect of varying the symptom-ramping parameter $\zeta$ on within-host symptom dynamics and between-host epidemic outcomes. (a) Within-host symptom score, $Y(\tau)$, as a function of infection age $\tau$ for increasing values of $\zeta$; larger $\zeta$ produces a higher, more pronounced symptom peak. (b) Corresponding contact-reduction function $\eta(\tau) = e^{-aY(\tau)}$, showing that higher symptom scores translate into a deeper and more prolonged reduction in contact rates during peak infection age. (c) Total infectious individuals, $I(t)+A(t)$, over epidemic time $t$ under each $\zeta$ scenario. The behavior parameter $a$ is fixed to 0.25, while the remaining parameters are as in \Cref{tab:param_value}.}
    \label{fig:zeta_eta_total}
\end{figure}

\begin{figure}[ht]
    \centering
    \includegraphics[width=1\textwidth]{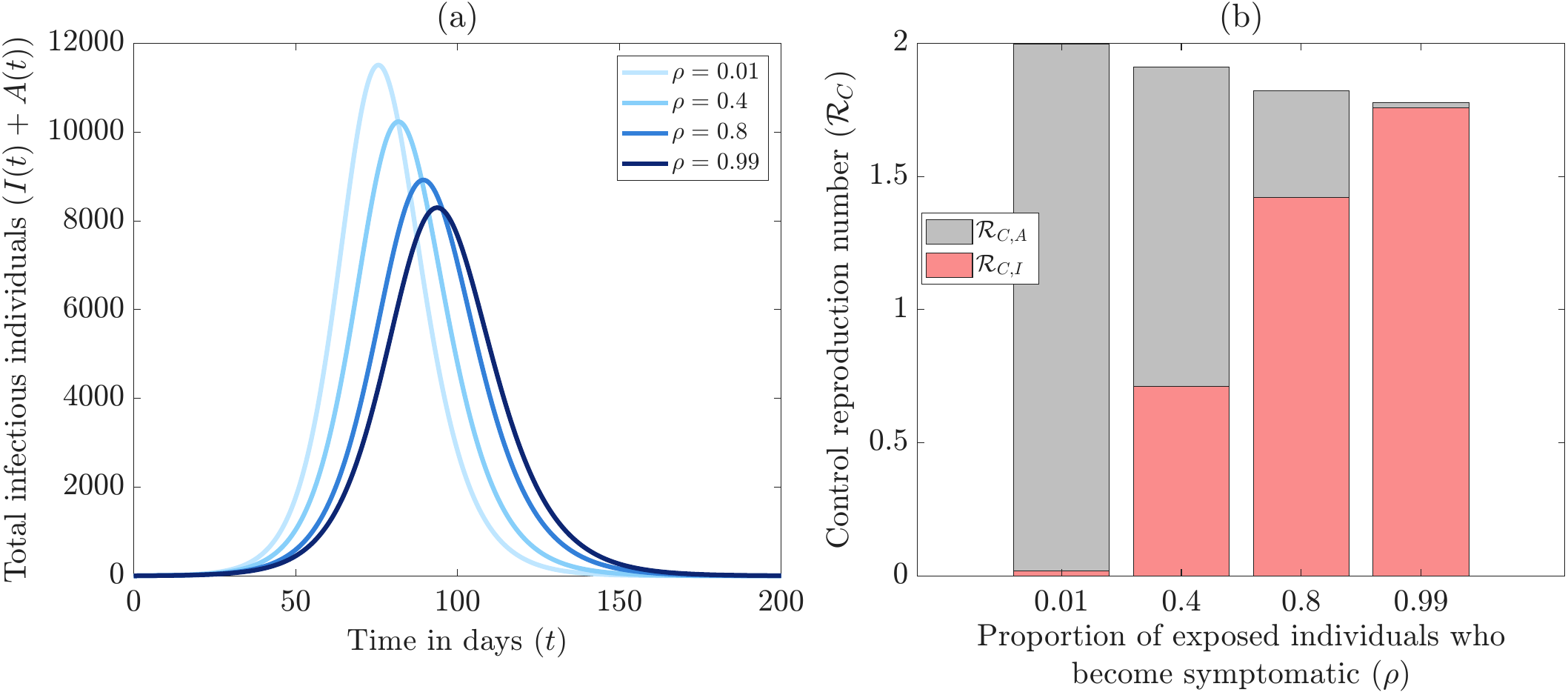}
\caption{Effect of varying the proportion of exposed individuals who become symptomatic, $\rho$, on epidemic dynamics and the basic reproduction number. (a) Total infectious individuals $I(t)+A(t)$ over epidemic time $t$ for four values of $\rho$. Increasing $\rho$ delays and lowers the epidemic peak, as a larger fraction of infections shift into the symptomatic class, where transmission is reduced by symptom-driven behavior change. (b) Corresponding contributions to the basic reproduction number, $\mathcal{R}_C=\mathcal{R}_{C,i}+\mathcal{R}_{C,A}$, decomposed into the symptomatic ($\mathcal{R}_{C,i}$) and asymptomatic ($\mathcal{R}_{C,A}$) components. As $\rho$ increases, the symptomatic contribution $\mathcal{R}_{C,i}$ grows while the asymptomatic contribution $\mathcal{R}_{C,A}$ shrinks. Since behaviorally-suppressed symptomatic transmission ($\eta(\tau)<1$) contributes less per capita than unmodified asymptomatic transmission, the total $\mathcal{R}_C$ declines as a larger share of infections become symptomatic. The behavior parameter $a$ is fixed to 0.25, while the remaining parameters are as in \Cref{tab:param_value}.}
    \label{fig:rho}
\end{figure}

\section{Comparison to Population-level SEIR Models}\label{sec:SEIR}


Various studies have shown that models {accounting} for population-level behavior change due to population-level reporting of disease (e.g., number of cases, hospitlization, or deaths) can generate qualitatively distinctive signatures such as slower post-peak decay, plateaus, or shoulders in
the epidemic curve \cite{weitz2020awareness, berestycki2021plateaus}. For instance, Pant et al. \cite{pant2026paradox} show that modifying the SEIR model to account for behavior change—under the assumption that increased disease activity leads to decreased transmission, and vice versa—results in heightened peak suppression along with a slower, more elongated epidemic decline as the strength of the behavioral response increases (an output of a model adapted from \cite{pant2026paradox} is given in Appendix \ref{SEIRb} for demonstration purposes). In contrast (as evident from Figures \ref{fig:eta_total_infectious}, \ref{fig:zeta_eta_total}, and \ref{fig:rho}), the shape of the trajectory of the number of infectious individuals generated through our multi-scale model \modelsp is akin to that of a standard population-level transmission model that does not account for behavior change. In fact, under the assumption $\gamma_i = \gamma_A$ (this assumption helps us isolate the effect of inducing behavior change in mult-scale mechanism), it turns out that the number of infectious individuals generated by the multiscale model (i.e., $I(t) +A(t)$) can be captured well by the infectious state variable of a standard SEIR model (i.e., $I(t)$ in SEIR). \Cref{fig:relative_error} demonstrates this by showing the relative error is ``reasonably small" even for large values of behavior parameter.

This qualitative feature of multi-scale model generating SEIR-like curves persists across other commonly used linking functions (specially Hill-type and linear) found in the literature. The precise functional forms of the  linking functions considered, their shape as a function of $\tau$ with respect to changes in the behavior-change parameter $a$, and the resulting consequences for the epidemic trajectory are given in \Cref{fig:linking_functions_comparison} in Appendix \ref{app:linking}. Moreover, using a linking function that zeros out contacts once symptoms exceed a certain threshold, we show that even when such extreme choice of linking functions are made SEIR-like curves can still be an output of the multiscale model (see \Cref{fig:isolation_linking}).

\begin{figure}[htbp]
    \centering
    \includegraphics[width=0.55\textwidth]{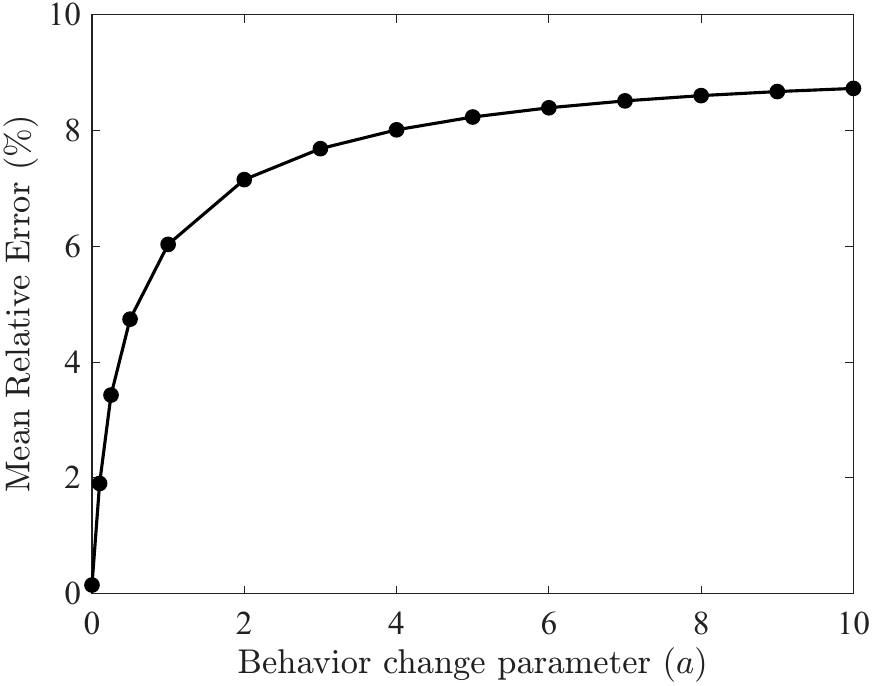}
\caption{Mean relative error between the true data generated from the multiscale model (i.e.,  $I(t)+A(t)$) and the SEIR model's  $I(t)$ fitted to it. For simplicity, only $\beta$ is fitted, while the remaining parameters $\gamma$ and $\sigma$, along with the initial conditions, are fixed to their true values in the SEIR model. For each value of $a$, the relative error is computed at every time point as $\mathrm{RE}(t) = \left|\frac{y_{\text{true}}(t) - y_{\text{fit}}(t)}{y_{\text{true}}(t)}\right| \times 100\%$, 
then averaged over all time points to give a single mean relative error per $a$ value. Circles indicate the sampled values of parameter $a$ at which the model was fit and the relative error computed, and the connecting line is a linear interpolation drawn for visual appeal.}
    \label{fig:relative_error}
\end{figure}

\newpage
\section{Discussion}\label{sec: discussion}

This study advances two related areas: (i) immuno-epidemiological modeling and (ii) behavioral epidemiology. Classical immuno-epidemiological models link the two scales through viral load, aiming to capture how infectivity changes over the course of infection. However, viral load is often misaligned with symptoms \cite{he2020temporal,canini2011population}, the very thing that drives behavior change, leaving this classical way of connecting the two scales inadequate for capturing behavior change. Meanwhile, typical population-level behavioral epidemiology models account for behavior change through population-level observations (e.g., reported cases, hospitalizations, or deaths) \cite{pant2024behavior,pant2026paradox,oveson2025modeling,lejeune2024mathematical}, largely ignoring illness-driven behavior change, a prominent mechanism of behavior change in many infectious diseases. To address this gap, we built a novel immuno-epi behavioral multi-scale model in which symptomatic individuals change their behavior based on a symptom score generated by a within-host immune model. Across several choices of linking function, we found that, {for the symptom-score trajectory generated by the within-host model considered here,} stronger symptom-driven behavior change delays and flattens the epidemic peak, but does not qualitatively alter the shape of the epidemic trajectory relative to a standard SEIR model without behavior change. In fact, when using an exponential linking function, we showed that the total number of infectious individuals generated by the multi-scale model can be closely reproduced by a standard SEIR model in which only the transmission rate is fitted.

Our main contribution to the immuno-epidemiological literature is the choice of linking variable itself: symptom score rather than viral load. Importantly, the study also shows that different functional forms of the linking function (e.g., exponential, Hill-type, and linear) can lead to qualitatively similar but quantitatively different results. This result aligns with the findings of another multiscale study \cite{gulbudak2022sensitivity}, where authors explored how different choices in linking functions (dependent on viral load and immune response) used in a immuno-epidemiological vector-host model impacted local sensitivity analysis results. Under the constraint of the same within-host dynamics and reproduction number at baseline, they found no qualitative difference in the resulting sensitivity indices or in the final disease abundance sizes across the different linking function choices.
This raises a broader question about the uncertainty and/or error introduced into complex multi-scale models by an incorrect choice of linking function — whether that means linking to the wrong within-host quantity, or assuming the wrong functional form. This mirrors a similar issue in population-level models, where different choices of force of infection (e.g., mass action, standard incidence, or Holling-type) can shape different dynamics, and, perhaps more relevantly, in behavioral epidemiology models, where different functional forms for behavior change produce a similar pattern. For instance, Pant et al. \cite{pant2026paradox} showed that different functional forms of transmission rate in a population-level behavioral model can produce qualitatively similar but quantitatively different results.

Our main contribution to the behavioral epidemiology literature is introducing behavior change that is generated through the within-host scale of a multi-scale model. This is in contrast to typical behavioral epidemiology models, which are largely built at the population level alone, without a multi-scale component — a gap that has recently been highlighted as a major barrier to understanding individual decision-making during infectious disease outbreaks \cite{saad2026interactions}. Importantly, this contrasts with recent population-level behavioral epidemiology models, which show that population-level feedback (i.e., individuals reacting to observed cases, hospitalizations, or deaths) can produce plateaus, shoulders, or an elongated epidemic decline \cite{weitz2020awareness, berestycki2021plateaus}. For instance, Pant et al. \cite{pant2026paradox} show that in a population-level model with such a feedback loop, stronger behavioral response leads to an elongated epidemic decline. This phenomenon is also present in trajectories generated by a group-structured behavioral model \cite{pant2024behavior}. Our illness-driven mechanism, by contrast, produces standard SEIR-like infectious trajectories. This is perhaps because, unlike population-level feedback models (where behavior change is induced by modifying the time-varying transmission rate $\beta(t)$ in response to changes in an observed aggregate outcome), in our multi-scale model behavior change is induced in symptomatic individuals by modifying their own transmission rate as a function of the severity of the symptoms they experience, an observable that evolves on a much shorter timescale than the outbreak and its associated aggregate outcome. 
This suggests that illness-driven behavior change of this form may be a comparatively weaker driver of transmission suppression at the aggregate scale.

This study has several limitations. First, although the model adds novelty by accounting for contact reduction as a function of a within-host-derived symptom score, it implicitly assumes that the probability of transmission per contact remains constant over time.  Second, although we study variability in symptom presentation by varying the symptom-activation parameter, we do not explore variability arising from viral load dynamics, the symptom clearance rate, or alternative symptom-score trajectories (e.g., bimodal symptom curves, or curves reflecting chronic conditions in which symptoms persist indefinite. {It is plausible that different choices of within-host dynamics, which give rise to different symptom score trajectories, might lead to population-level trajectories that deviate from standard SEIR-like curves.} Third, this study does not include data fitting, identifiability analysis (structural or practical), or an assessment of how uncertainty propagates from the within-host to the population scale. Future studies could address these gaps.

Overall, this study uses a within-host-model derived symptom score to inform behavior change at a population level, a mechanistically distinct way of inducing behavior change in behavioral epidemiological models, relative to population-feedback-driven approaches. {For the choices of symptom score curves considered,} we show that this immuno-epi behavioral model produces infection curves that closely resemble those of a standard SEIR model, underscoring that the mechanism generating an epidemic curve cannot always be inferred from its shape alone. Because several different choices of linking function all produced qualitatively SEIR-like curves, {it remains unclear if different choice of within-host dynamics and linking functions can cascade into population-level trajectories different than those produced by a standard SEIR-like model. Furthermore, its important to ask which additional data sets need to be observed to know if population-level disease trajectories are better described by a standard SEIR-like model that does not explicitly account for behavior change or by a multi-scale behavioral model.} Multi-scale models are typically fitted to within-host data (e.g., viral load) and population-level disease data (e.g., case counts), but whether {observing} additional novel data sources, even sparsely sampled \cite{pant2026uncovering}, (e.g., symptom-severity and corresponding behavioral survey data) could help identify the appropriate linking function remains an important open problem.


\paragraph{Acknowledgments.} Support for this research was provided to BP by an AMS-Simons Travel Grant. 
BP and HD would like to thank Dana Kristine Pasquale  and Jorge X. Velasco-Hernandez for organizing the online conference ``Building Teams to Build Better Epidemiological Models: Balancing Participation from Mathematical and Social, Behavioral, and Economic Sciences" at Duke University, which was instrumental in connecting BP and HD and ultimately seeding the initial idea for this project.

\paragraph{Generative AI Usage Statement.} During the preparation of this work, the authors used Claude (Anthropic), ChatGPT (OpenAI), and Gemini (Google) for the purposes of proofreading and improving the language and readability of the manuscript.  Additionally, the authors developed a master simulation file implementing the numerical scheme described in Appendix \ref{numerical_scheme}, and AI-assisted tools were used to support the debugging of code and the generation and refinement of some figures produced from it. All AI-generated content and suggestions were reviewed, verified, and edited by the authors, who take full responsibility for the accuracy, integrity, and originality of all results presented in the manuscript.


\newpage
\begin{appendices}
\counterwithin{figure}{section}

\section{Proof of \texorpdfstring{\Cref{thm:laspop}}{Lg}: LAS of DFE of Epidemiological Model \texorpdfstring{\eqref{eq:population_level}}{Lg}}\label{app:LAS}

\begin{proof}
    The following proof of local asymptotic stability of the disease-free equilibrium $\mathcal{E}_0$, given by \eqref{eq:DFEpop}, of the epidemiological model \eqref{eq:population_level} is entails the linearization of the model \eqref{eq:population_level} around the disease-free equilibrium \cite{martcheva2015introduction}.

Let
\begin{equation}\label{app:LAS1}
\begin{aligned}
S(t) & = S_0 + u(t), \quad \quad E(t) = v(t),
\quad\quad 
i(\tau,t) = w(\tau,t),\\
A(t) &= x(t), 
\quad \quad \quad \quad
R(t) = y(t), \text{ and } N(t) = S_0 + n(t).
\end{aligned}
\end{equation}

We note that (using approximation $\frac{1}{1+\epsilon} \approx 1-\epsilon$ and dropping higher order non-linear terms) \cite{martcheva2015introduction}:
$$\frac{S}{N} = \frac{S_0 + u(t)}{S_0+n(t)} = \frac{S_0+u(t)}{S_0}\left(\frac{1}{1+\dfrac{n(t)}{S_0}}\right) \approx \frac{S_0+u(t)}{S_0}\left(1-\frac{n(t)}{S_0}\right) \approx 1 + \frac{u(t)}{S_0}-\frac{n(t)}{S_0}.
$$    

We linearize the epidemiological model \eqref{eq:population_level} using \eqref{app:LAS1} to get (note that $S_0 = \frac{\Lambda}{\mu} \implies \Lambda = \mu S_0$:

\begin{equation}\label{app:LAS2}
\begin{aligned}
u'(t)  &= - \beta \left[\int_0^\infty \eta(\tau) w(\tau,t) \,d\tau + x(t)\right] - \mu u(t),\\
v'(t) &= \beta \left[\int_0^\infty \eta(\tau) w(\tau,t) d\tau + x(t)\right] - (\sigma+\mu)v(t),\\
w_\tau + w_t &= - (\gamma_i(\tau)+\mu) w(\tau,t), \quad \quad \quad \quad \quad w(0,t) = \rho \sigma v(t),\\
x'(t) &= (1-\rho)\sigma v(t) - (\gamma_A+\mu) x(t),\\
y'(t) &= \int_0^\infty \gamma_i(\tau) w(\tau,t) \,d\tau +\gamma_A x(t) - \mu y(t).
\end{aligned}
\end{equation}

Since \eqref{app:LAS2} is a linear model, we look for solution of the form:

\begin{equation}\label{app:LAS3}
    \begin{aligned}
u(t) = u_0 e^{\lambda t}, \quad v(t) = v_0 e^{\lambda t}, \quad w(\tau,t) = w(\tau)e^{\lambda t} , \quad
x(t) = x_0e^{\lambda t}, \quad \text{ and } \quad y(t) = y_0 e^{\lambda t}.
    \end{aligned}
\end{equation} 

Plugging \eqref{app:LAS3} into the system \eqref{app:LAS2}  gives the following eigenvalue problem: 

\begin{subequations} \label{eq:system}
\begin{align}
\lambda u_0  &= - \beta \left[\int_0^\infty \eta(\tau) w(\tau) \,d\tau + x_0\right] - \mu u_0 \label{eq:4i} \\
\lambda v_0 &= {\beta} \left[\int_0^\infty \eta(\tau) w(\tau) d\tau + x_0\right] - (\sigma+\mu)v_0 \label{eq:4ii} \\
w_\tau(\tau) + \lambda w(\tau) &= - (\gamma_i(\tau)+\mu) w(\tau), \quad w(0) = \rho \sigma v_0 \label{eq:4iii} \\
\lambda x_0 &= (1-\rho)\sigma v_0 - (\gamma_A+\mu) x_0 \label{eq:4iv} \\
\lambda y_0 &= \int_0^\infty \gamma_i(\tau) w(\tau) \,d\tau +\gamma_A x_0 - \mu y_0 \label{eq:1v}
\end{align}
\end{subequations}

Solving Equation \eqref{eq:4iii} (along with its initial condition) yields:
\begin{equation}\label{app:LAS5}
    w(\tau)=w(0)\pi(\tau)e^{-\lambda \tau}=\rho\sigma v_0\pi(\tau)e^{-\lambda \tau}
\end{equation}

where 
\[
\pi(\tau)=\exp{\left(-\int_0^\tau(\gamma_i({{s}})+\mu) d{{s}}\right)}
\]
is the probability of a symptomatic infectious individual remaining infectious $\tau$ time units after infection. 

Likewise, solving Equation \eqref{eq:4iv} yields:
\begin{equation}\label{app:LAS6}
x_0 = \frac{(1-\rho)\sigma}{\lambda+\gamma_A+\mu} v_0. 
\end{equation}

Plugging expressions for $w(\tau)$ and $x_0$ from equations \eqref{app:LAS5} and \eqref{app:LAS6}, respectively into Equation \eqref{eq:4ii}, leads to the characteristic equation $\mathcal{G}(\lambda) =1$, where 


\begin{equation}\label{eq:Glambda}
G(\lambda) = \frac{{\beta}}{\lambda+\sigma+\mu}\rho\sigma\int_0^\infty\eta(\tau)\pi(\tau)e^{-\lambda\tau }d\tau+\left(\frac{{\beta}}{\lambda+\sigma+\mu}\right)\left(\frac{(1-\rho)\sigma}{\lambda +\gamma_A+\mu}\right).
\end{equation}
{The DFE that is given in \ref{eq:DFEpop} is locally asymptotically stable if all eigenvalues (i.e., $\lambda$ solve $G(\lambda)=1$) have negative real parts, and the DFE is unstabile if there exists at least one eigenvalue with positive real part. }

Note that at $\lambda=0$ we  get the following generation of the basic reproduction number:
\begin{equation}
    G(0)=\frac{{\beta}}{\sigma+\mu}\rho\sigma\int_0^\infty\eta(\tau)\pi(\tau)d\tau + \left(\frac{{\beta}}{\sigma+\mu}\right)\left(\frac{(1-\rho)\sigma}{\gamma_A+\mu}\right)= \mathcal{R}_{C,i}+\mathcal{R}_{C,A} = \mathcal{R}_C.
    \label{eqn: R0}
\end{equation}

Let 
\begin{equation*}
     G_i(\lambda) = \frac{{\beta}}{\lambda+\sigma+\mu}\rho\sigma\int_0^\infty\eta(\tau)\pi(\tau)e^{-\lambda\tau }d\tau
    \quad \quad
    \text{and}
    \quad \quad 
     G_A(\lambda) = \left(\frac{{\beta}}{\lambda+\sigma+\mu}\right)\left(\frac{(1-\rho)\sigma}{\lambda +\gamma_A+\mu}\right) 
\end{equation*}
We can rewrite $G_i(\lambda$)
and $G_A(\lambda$) as a function of $\mathcal{R}_{C,i}$ and $\mathcal{R}_{C,A}$ (provided that there is some positive interval for which $\eta(\tau)$ is strictly positive), respectively, as:
\begin{equation*}
\begin{aligned}
G_i(\lambda) &= \mathcal{R}_{C,i} \left(\frac{\sigma+\mu}{\lambda+\sigma+\mu}\right) \int_0^\infty\eta(\tau)\pi(\tau)e^{-\lambda\tau }d\tau \frac{1}{\int_0^\infty\eta(\tau)\pi(\tau)d\tau} \quad \text{and} \\
     G_A(\lambda) &= \mathcal{R}_{C,A} \left(\frac{\sigma+\mu}{\lambda+\sigma+\mu}\right)  \left(\frac{\gamma_A+\mu}{\lambda+\gamma_A+\mu}\right) .
\end{aligned}
\end{equation*}
{Let $\lambda\in\mathbb{R}$. Since $\eta(\tau)$ and $\pi(\tau)$ are strictly positive, the} 
\begin{align*}
\frac{d G(\lambda) }{d\lambda}=-\frac{\beta}{(\lambda+\mu+\sigma)^2}&\left[\rho\sigma\int_0^{\infty}\eta(\tau)\pi(\tau)e^{-\lambda t}d\tau+\frac{(1-\rho)\sigma}{\lambda+\gamma_A+\mu}\right]\\
&-\left(\frac{\beta}{\lambda+\sigma+\mu}\right)\left[\rho\sigma\int_0^{\infty}\eta(\tau)\pi(\tau)\tau e^{-\lambda\tau}d\tau+\frac{(1-\rho)\sigma}{(\lambda+\gamma_A+\mu)^2}\right]
\end{align*}
is negative whenever $\lambda>0$. Thus $G$ is a decreasing function of $\lambda$ whenever $\lambda$ is positive, and 
\[
\lim\limits_{\lambda\to\infty} G(\lambda)=0.
\]
When $\mathcal{R}_C=G(0)>1$, there exists a unique positive real solution to the equation $G(\lambda)=1$, by the Intermediate Value Theorem.
Next we consider the case where $\mathcal{R}_0=G(0)<1$. Let $\lambda=\alpha+bi$ be an eigenvalue with nonnegative real part ($\alpha\geq 0$). Then,
\begin{align}
1&=|G(\alpha+bi)|\nonumber\\
&\leq |G_i(\alpha+bi)|+|G_A(\alpha+bi)|\nonumber\\
&\leq \mathcal R_{C,i}\frac{\sigma+\mu}{|\lambda+\sigma+\mu|}\int_0^{\infty}\eta(\tau)\pi(\tau)e^{-\alpha \tau}|e^{-b\tau i}|d\tau\left(\frac{1}{\int_0^{\infty}\eta(\tau)\pi(\tau)d\tau}\right)+\mathcal R_{C,A}\frac{\sigma+\mu}{|\lambda+\sigma+\mu|}\frac{\gamma_A+\mu}{|\lambda+\gamma_A+\mu|}\nonumber\\
&\leq \mathcal R_{C,i}\frac{\sigma+\mu}{|\lambda+\sigma+\mu|}\left(\frac{\int_0^{\infty}\eta(\tau)\pi(\tau)e^{-\alpha \tau}d\tau}{\int_0^{\infty}\eta(\tau)\pi(\tau)d\tau}\right)+\mathcal R_{C,A}\frac{\sigma+\mu}{|\lambda+\sigma+\mu|}\frac{\gamma_A+\mu}{|\lambda+\gamma_A+\mu|}.\label{eqn: lasineq}
\end{align}
Since $\alpha\geq0,\;\eta(\tau)\geq 0$, and $\pi(\tau)\geq 0$ for all $\tau\geq0$, $\eta(\tau)\pi(\tau)e^{-\alpha \tau}\leq \eta(\tau)\pi(\tau)$ for all $\tau\geq0$ which implies
\[\frac{\int_0^{\infty}\eta(\tau)\pi(\tau)e^{-\alpha \tau}d\tau}{\int_0^{\infty}\eta(\tau)\pi(\tau)d\tau}\leq 1.\] In addition nonnegativity of the real part of $\lambda$ ($\alpha\geq0$) we have the following hold any positive real number $x$:
\[
\frac{x}{|\lambda+x|}\leq\frac{x}{\sqrt{(\alpha+x)^2+b^2}}\leq \frac{x}{(\alpha+x)}\leq 1.
\]
Thus, inequality (\ref{eqn: lasineq}) simplifies to 
\[
1=|G(\alpha+bi)|\leq \mathcal{R}_{C}<1,
\]
which is a contradiction. Therefore, all eigenvalues have negative real parts when $\mathcal{R}_0<1$. This completes the proof.

\end{proof}

\newpage
\section{Proof of \texorpdfstring{\Cref{thm:endemicUNIQUE}}{Lg}: Unique Endemic Equilibrium of Epidemiological Model \texorpdfstring{\eqref{eq:population_level}}{Lg}}\label{app:endemicunique}
\begin{theorem}
    If $\mathcal{R}_C>1$, then system (\ref{eq:population_level}) has a unique endemic equilibrium. 
    \begin{align*}
        S^*&=\frac{N^*}{\mathcal{R}_C}= \frac{\Lambda}{\mu}\cdot\frac{1}{\mathcal{R}_C}\\
        E^*&=\frac{\Lambda}{\sigma+\mu}\left(1-\frac{1}{\mathcal{R}_C}\right)\\
        i^*(\tau)&=i^*(0)\pi(\tau) = \rho\sigma\pi(\tau)E^*\\
        A^*&=\frac{(1-\rho)\sigma}{\gamma_A+\mu}E^*\\
        R^*&=\frac{1}{\mu}\left[\rho\sigma\int_0^{\infty}\gamma_i(\tau)\pi(\tau)d\tau+\frac{\gamma_A(1-\rho)\sigma}{\gamma_A+\mu}\right]E^*
    \end{align*}
    where $\pi(\tau)=\exp\left(-\int_0^{{\tau}}(\gamma_i({{\widehat{\tau}}})+\mu)d{\widehat{\tau}}\right)$
    
\end{theorem}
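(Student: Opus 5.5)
We will find all steady states directly, by setting every $t$-derivative in \eqref{eq:population_level} to zero, and then solve the resulting system. The key observation is that all infected compartments are linear in $E^*$ once the force of infection is fixed, so the problem collapses to a scalar equation.

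First, we show the total population equilibrates. Summing the equations gives $N' = \Lambda - \mu N$. At any equilibrium this forces $N^* = \Lambda/\mu$, so the ratio $S/N$ becomes $\mu S^*/\Lambda$.

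Next, we express the infected compartments in terms of $E^*$. The stationary PDE $\frac{d i^*}{d\tau} = -(\gamma_i(\tau)+\mu)i^*$ with boundary condition $i^*(0)=\rho\sigma E^*$ integrates to $i^*(\tau)=\rho\sigma E^*\pi(\tau)$. The $A$-equation gives $A^* = \frac{(1-\rho)\sigma}{\gamma_A+\mu}E^*$. Substituting these into the force of infection yields
\[
\lambda^* = \frac{\beta\mu S^*}{\Lambda}\,E^*\left[\rho\sigma\int_0^\infty \eta(\tau)\pi(\tau)\,d\tau + \frac{(1-\rho)\sigma}{\gamma_A+\mu}\right]
= \frac{\mu S^*}{\Lambda}(\sigma+\mu)\,\mathcal{R}_C\, E^*,
\]
using the compact form of $\mathcal{R}_C$.

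We then reduce to a scalar equation. The $E$-equation reads $\lambda^* = (\sigma+\mu)E^*$. For an endemic equilibrium we need $E^*>0$, since $E^*=0$ forces $i^*\equiv 0$ and $A^*=0$, which is the DFE. Dividing by $(\sigma+\mu)E^*$ gives $\frac{\mu S^*}{\Lambda}\mathcal{R}_C = 1$, hence $S^* = \frac{\Lambda}{\mu\mathcal{R}_C} = N^*/\mathcal{R}_C$.

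The $S$-equation, combined with the $E$-equation, gives $\Lambda - \mu S^* = (\sigma+\mu)E^*$. Therefore
\[
E^* = \frac{\Lambda}{\sigma+\mu}\left(1-\frac{1}{\mathcal{R}_C}\right).
\]
This is positive if and only if $\mathcal{R}_C>1$, which establishes both existence under the hypothesis and uniqueness, since every quantity was forced. Finally, the $R$-equation gives $R^* = \frac{1}{\mu}\left[\int_0^\infty\gamma_i i^* \,d\tau + \gamma_A A^*\right]$, which matches the stated formula.

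The only delicate step is justifying the division by $E^*$: excluding $E^*=0$ as the disease-free case, and confirming that $\int_0^\infty \eta\pi \,d\tau$ is finite and positive so that $\mathcal{R}_C$ is well defined. Finiteness holds because $\pi(\tau)\le e^{-\mu\tau}$ and $0\le\eta\le 1$.

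A remaining point is consistency: the resulting $S^*,E^*,i^*,A^*,R^*$ must sum to $\Lambda/\mu$. This follows automatically from the summed equation, but it is worth checking the integral identity $\int_0^\infty(\gamma_i+\mu)\pi \,d\tau = 1$.
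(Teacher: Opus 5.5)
Your proposal is correct and follows essentially the same route as the paper. You set $N^*=\Lambda/\mu$, write $i^*$ and $A^*$ linearly in $E^*$, and divide the $E$-equation by $(\sigma+\mu)E^*$ to get $S^*=N^*/\mathcal{R}_C$. Combining the $S$- and $E$-equations then forces $E^*=\frac{\Lambda}{\sigma+\mu}\left(1-1/\mathcal{R}_C\right)$, which is positive if and only if $\mathcal{R}_C>1$.

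Your closing consistency check, that the candidate's components really sum to $\Lambda/\mu$, is a point the paper leaves implicit and is worth keeping. It rests on $\int_0^\infty(\gamma_i+\mu)\pi\,d\tau=1$, which holds because $\pi(\tau)\le e^{-\mu\tau}\to 0$. One small notational point: what you call $\lambda^*$ is the incidence $\lambda S$, not the paper's force of infection $\lambda$.
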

\begin{proof}
\textbf{Observation}: The total population $N(t)$ described in (\ref{eq:population_level}) is asymptotically constant. 
When integrating the PDE equation relating to the symptomatic infectious compartment in (\ref{eq:population_level}) with respect to $\tau$, we obtain 
\begin{align*}
    i(\tau,t)\rvert_0^{\infty}+\frac{dI}{dt}&=-\int_0^{\infty}\gamma_i(\tau)i(\tau,t)d\tau - \mu I(t).
\end{align*}
If we assume $\lim_{\tau\to\infty}i(\tau,t)=0$ and apply condition $i(0,t)=\rho\sigma E(t)$ from (\ref{eq:population_level}) we obtain
\begin{equation*}
    \frac{dI}{dt}=\rho\sigma E(t)-\int_0^\infty\gamma_i(\tau)d\tau-\mu I(t)
\end{equation*}
Adding the equation above to the other equations given in (\ref{eq:population_level}) yields
\begin{equation}
    \frac{dN}{dt}= \Lambda - \mu N
\end{equation}
Thus in looking at any time independent solution to system (\ref{eq:population_level}), we have that $N^*=\frac{\Lambda}{\mu}$

We consider a time independent solution to system (\ref{eq:population_level}) for which $E^*\neq0$
\begin{subequations}\label{eq:steadystateeq}
\begin{align}
0 &= \Lambda -\beta \left( \displaystyle\int_0^\infty \eta(Y(\tau))i^*(\tau) \, d\tau + A^* \right) \frac{S^*}{N^*} - \mu S^*, \label{eqn: S_RHS}\\
0 &= \beta \left( \displaystyle\int_0^\infty \eta(Y(\tau))i^*(\tau) \, d\tau + A^* \right) \frac{S^*}{N^*} - (\sigma+\mu) E^*, \label{eqn: E_RHS}\\
\frac{d i^*(\tau)}{d \tau} &= -(\gamma_i(\tau)+\mu) i^*(\tau), \quad \quad  \quad  i^*(0) = \rho \sigma E^*, \label{eqn: I_RHS}\\
0 &= (1-\rho) \sigma E^* - (\gamma_A+\mu) A^*, \label{eqn: A_RHS}\\
0 &= \int_0^\infty \gamma_i(\tau) i^*(\tau) \, d\tau + \gamma_A A^* - \mu R^*\label{eqn: R_RHS}.
\end{align}
\end{subequations}
From (\ref{eqn: I_RHS}) where have 
\begin{equation}
    i^*(\tau)=i^*(0)\pi(\tau) = \rho\sigma\pi(\tau)E^*\label{eqn: istar}
\end{equation}
where $\pi(\tau)=e^{-\int_0^{\tau}(\gamma_i({\widehat{\tau}})+\mu) d{\widehat{\tau}}}$. 
From (\ref{eqn: A_RHS}) we have 
\begin{equation}
    A^*= \frac{(1-\rho)\sigma}{\gamma_A+\mu}E^*. \label{eqn: Astar1}
\end{equation}
Substituting (\ref{eqn: istar}) and (\ref{eqn: Astar1}) into (\ref{eqn: R_RHS}) yields 
\begin{equation}
    R^*= \frac{1}{\mu}\left[\rho\sigma\int_0^{\infty}\gamma_i(\tau)\pi(\tau)d\tau+\frac{\gamma_A(1-\rho)\sigma}{\gamma_A+\mu}\right]E^*. \label{eqn: Rstar}
\end{equation}
Substituting (\ref{eqn: istar}) and (\ref{eqn: Astar1}) into equation (\ref{eqn: E_RHS}) gives 
\begin{equation*}
    0=\left(\beta\rho\sigma\int_0^{\infty}{\eta(Y(\tau))}\pi(\tau)d\tau+\beta\frac{(1-\rho)\sigma}{\gamma_A+\mu}\right)\frac{S^*}{N^*}E^*-(\sigma+\mu)E^*.
\end{equation*}
By dividing the above equation by $(\sigma+\mu)E^*$ and rearranging terms, we obtain an equation in terms of $\mathcal{R}_C$ (\ref{eqn: R0}) 
\begin{equation}\label{eqn: Sstar}
    S^*=\frac{N^*}{\mathcal{R}_C}
\end{equation}
where $N^*=\frac{\Lambda}{\mu}$.
Applying equations (\ref{eqn: istar}), (\ref{eqn: Astar1}), and (\ref{eqn: Sstar}) into equation (\ref{eqn: S_RHS}) yields 
\begin{equation}
   {0=}\Lambda-(\sigma+\mu)E^*-\frac{\mu N^*}{\mathcal{R}_C}
\end{equation}
Rearranging terms and using $N^*=\frac{\Lambda}{\mu}$ gives
\begin{equation}\label{eqn: Estar}
E^*= \frac{\Lambda}{\sigma+\mu}\left(1-\frac{1}{\mathcal{R}_C}\right)
\end{equation}
$E^*>0$ if and only if $\mathcal{R}_C>1$.
\end{proof}

\newpage
\section{Numerical Scheme}\label{numerical_scheme}
The numerical scheme presented is a mixed implicit scheme. 
Let $T$ and $\mathcal{T}$ be numbers relating to the final epidemic time and final time since infection time (respectively) that we use in generating numerical solutions for the multiscale model \model. We partition the time since infection time interval $[0,\mathcal{T}]$ using $\hat{K}+1$ equally spaced nodes $0=\tau_0\leq\tau_1\leq\dots\leq\tau_{\hat{K}}=\mathcal{T}$. In particular $\tau_k=k\Delta \tau$  (for all $0\leq k\leq \hat{K}$) with mesh size $\Delta \tau=\frac{\mathcal{T}}{\hat{K}}$. Similarly we partition the epidemic time interval $[0,T]$, using $\hat{N}+1$ equally spaced nodes $\{t_n\}_{n=0}^{\hat{N}}$ ($t_n= n\Delta t$ for all $0\leq n\leq \hat{N}$) with mesh size $\Delta t=\frac{T}{\hat{N}}$. We choose partitions for each time interval to where the mesh sizes are the same, $\Delta t=\Delta \tau$ (and we use $\Delta t$ exclusively throughout the numerical scheme). Additionally, partitions are chosen to  where numbers $\hat{K}=\frac{\mathcal{T}}{\Delta t}$ and $\hat{N}=\frac{T}{\Delta t}$ are integers.\\
 We use the following to notate our approximations: 
$S^n\approx S(t_n)$, $E^n\approx E(t_n)$, $A^n\approx A(t_n)$, $R^n\approx R(t_n)$, $i_k^n\approx i(\tau_k,t_n)$, and $N^n\approx N(t_n)\approx S^n+E^n+\sum\limits_{k=1}^{\hat{K}} i_k^n\Delta t+A^n+R^n$. We initialize all variables with the initial conditions that give the values at time level zero
\[S^0=S_0,\quad E^0=E_0,\quad A^0=A_0, \quad R^0=R_0, \quad \text{and } i_k^0=i_0(\tau_k) \; (\text{for }k=0,\dots,\hat{K})
\]
Additionally, let $\eta_{k}=\eta_i(\tau_k)$ and $\gamma_{i,k}=\gamma_i(\tau_k)$ [Note that within our simulations $\gamma_i(\tau)$ is set as some constant function, but the scheme presented here assumes that $\gamma_i(\tau)$ is dependent on the within-host dynamics.]. 
We solve the within-host model using \texttt{MATLAB}'s ODE solver \texttt{ode45} and store the within host solutions at the specified mesh points $\{\tau_k\}_{k=0}^{\hat{K}}$. These values are then used directly for computing $\eta_{k}$ and $\gamma_{i,k}$ for $k=0,\dots, \hat{K}$.  
We summarize the numerical method below. 
\begin{align*}
N^{n+1} &= \frac{N^n+\Lambda \Delta t}{1+\mu\Delta t } \\
S^{n+1} &= \frac{S^n+\Lambda\Delta t }{1+\mu\Delta t +\frac{\beta}{N^{n+1}}\left(\frac{\Delta t}{2}\left[\eta_0i^n_0+2\left(\sum\limits_{k=0}^{\hat{K}-1} \eta_ki_k^n\right)+\eta_{\hat{K}}i^n_{\hat{K}}\right]+A^n\right)\Delta t}\\
E^{n+1}&= \frac{E^n+\frac{\beta S^{n+1}}{N^{n+1}}\left(\frac{\Delta t}{2}\left[\eta_0i^n_0+2\left(\sum\limits_{k=0}^{\hat{K}-1} \eta_ki_k^n\right)+\eta_{\hat{K}}i^n_{\hat{K}}\right]+A^n\right)\Delta t}{1+(\sigma+\mu)\Delta t}\\
i^{n+1}_0&=\rho\sigma E^{n+1}\\
i^{n+1}_{k+1}&=\frac{i^n_k}{1+(\gamma_{i,k}+\mu)\Delta t}\\
A^{n+1}&=\frac{A^n+(1-\rho)\sigma E^{n+1}\Delta t}{1+(\gamma_A+\mu)\Delta t}\\
R^{n+1}&= \frac{R^n+\left(\frac{\Delta t}{2}\left[\gamma_{i,0}i_{0}^{n+1}+2\left(\sum\limits_{k=1}^{\hat{K}-1}\gamma_{i,k}i_{k}^{n+1}\right)+\gamma_{i,\hat{k}}i^{n+1}_{\hat{K}}\right]+\gamma_A A^{n+1}\right)\Delta t}{1+\mu\Delta t}
\end{align*}
Note that for approximating $\int_0^{\infty}\eta(\tau)i(\tau,t)d\tau$ and $\int_0^{\infty}\gamma_i(\tau)i(\tau,t)d\tau$, we apply trapezoidal rule (but with upper integral limit being replaced by a sufficiently large number  $\mathcal{T}$). 



\newpage
\section{Alternative Linking Functions}\label{app:linking}

This study primarily focuses on exponential form of linking function \eqref{eq:exponential}. In this section, we present other two commonly used linking functions (namely Hill \cite{fraser2007variation,atkins2023immuno,saldana2025multiscale, martcheva2015introduction} and linear functions \cite{gilchrist2002modeling,saldana2025multiscale, martcheva2015introduction}) alongside a simple threshold based on/off linking function.

\textbf{Generalized Hill function.} A generalized Hill function that links the two scales can be express as:
\begin{equation}\label{eq:hill}
    \eta(Y) = 1 - a\frac{Y^k}{b^k + Y^k},
\end{equation}
where $a \in (0,1]$, $b > 0$, and $k$ is a positive integer (we investigate $k = 1, 2$). The parameter $(1-a)$ is an explicit transmission floor, representing residual transmission that persists even under maximal symptom severity (capturing, for instance, household exposure to a bedridden individual). The parameter $a$ thus has a direct interpretation as the maximum achievable behavioral modification, ranging from mild social distancing ($a$ small) to strict quarantine ($a$ close to 1). The half-saturation constant $b$ gives the symptom score at which half of the maximum behavioral reduction is achieved. For simplcity, we assume $b=1$ for simulation purposes. 

\paragraph{Linear function.} A linear linking function that connects the two scales can be expressed as:
\begin{equation}
\eta(Y) = \max(1-aY,\ 0),
\label{eq:linear_link}
\end{equation}
where $a>0$ controls the rate at which increasing symptom scores reduce contact. 

\paragraph{Threshold based on/off function.} A simple threshold-based linking function, representing a naive isolation mechanism, can be expressed as:
\begin{equation}
\eta(Y) =
\begin{cases}
1, & Y(\tau) < Y_{\text{threshold}}, \\
0, & Y(\tau) \geq Y_{\text{threshold}},
\end{cases}
\label{eq:threshold_link}
\end{equation}
where $Y_{\text{threshold}}>0$ is the symptom score at which an individual fully isolates. 

\begin{figure}[htbp]
    \centering
    \includegraphics[width=\textwidth]{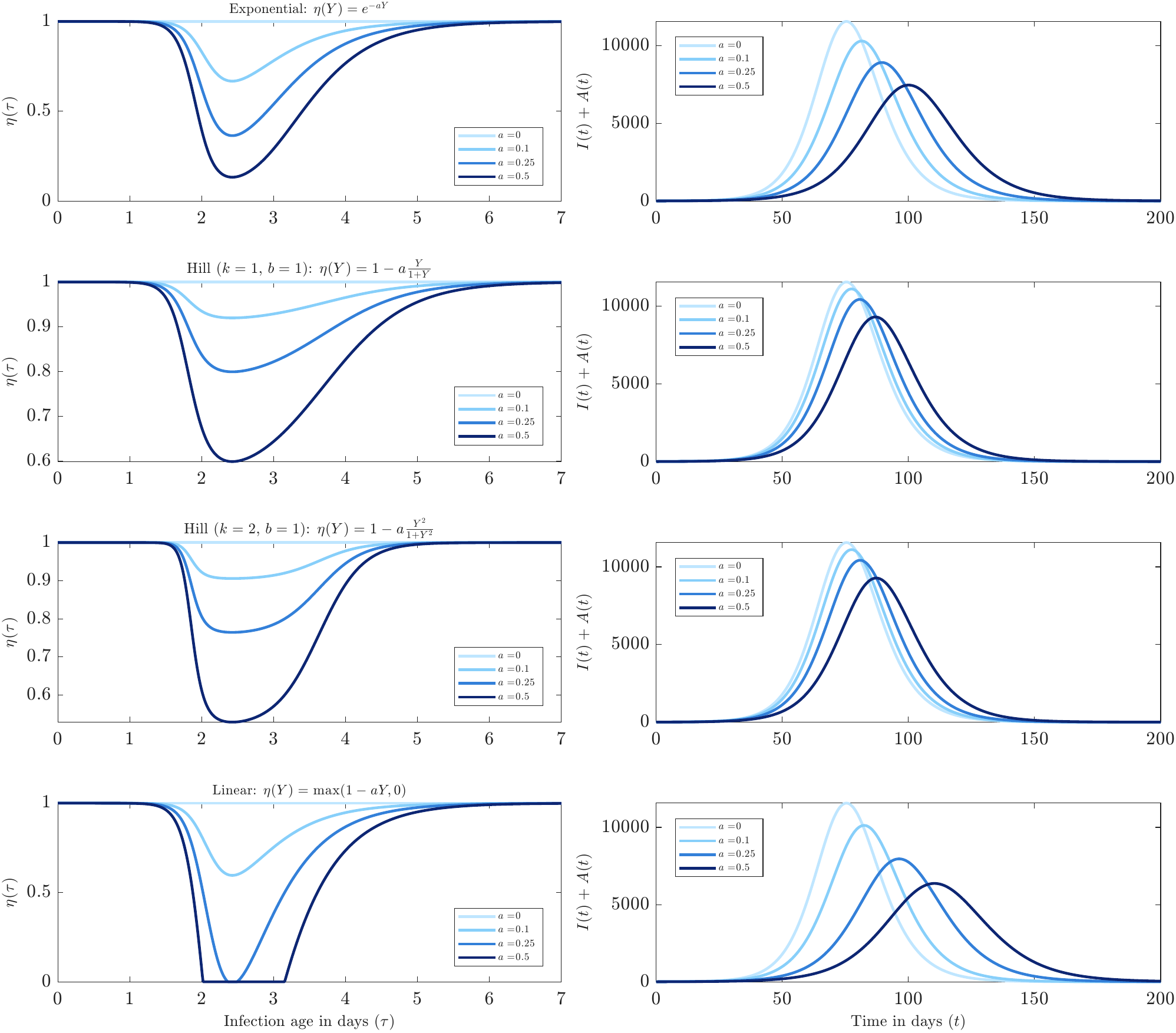}
\caption{Effect of the different choice of multi-scale linking function on contact reduction and epidemic dynamics. Each row corresponds to a different functional form for the linking function $\eta(\tau)$,: (top to bottom) exponential; Hill-type with $k=1$, $b=1$; rational with $k=2$, $b=1$; and linear. In each row, curves show the behavior-change parameter $a\in\{0,0.1,0.25,0.5\}$ (light to dark blue), with $a=0$ corresponding to no behavioral response ($\eta\equiv1$). Left column: the resulting contact-reduction function $\eta(\tau)$ as a function of infection age $\tau$. Right column: the corresponding total infectious population $I(t)+A(t)$.}
\label{fig:linking_functions_comparison}
\end{figure}

\newpage

\begin{figure}[htbp]
    \centering
\includegraphics[width=1\textwidth]{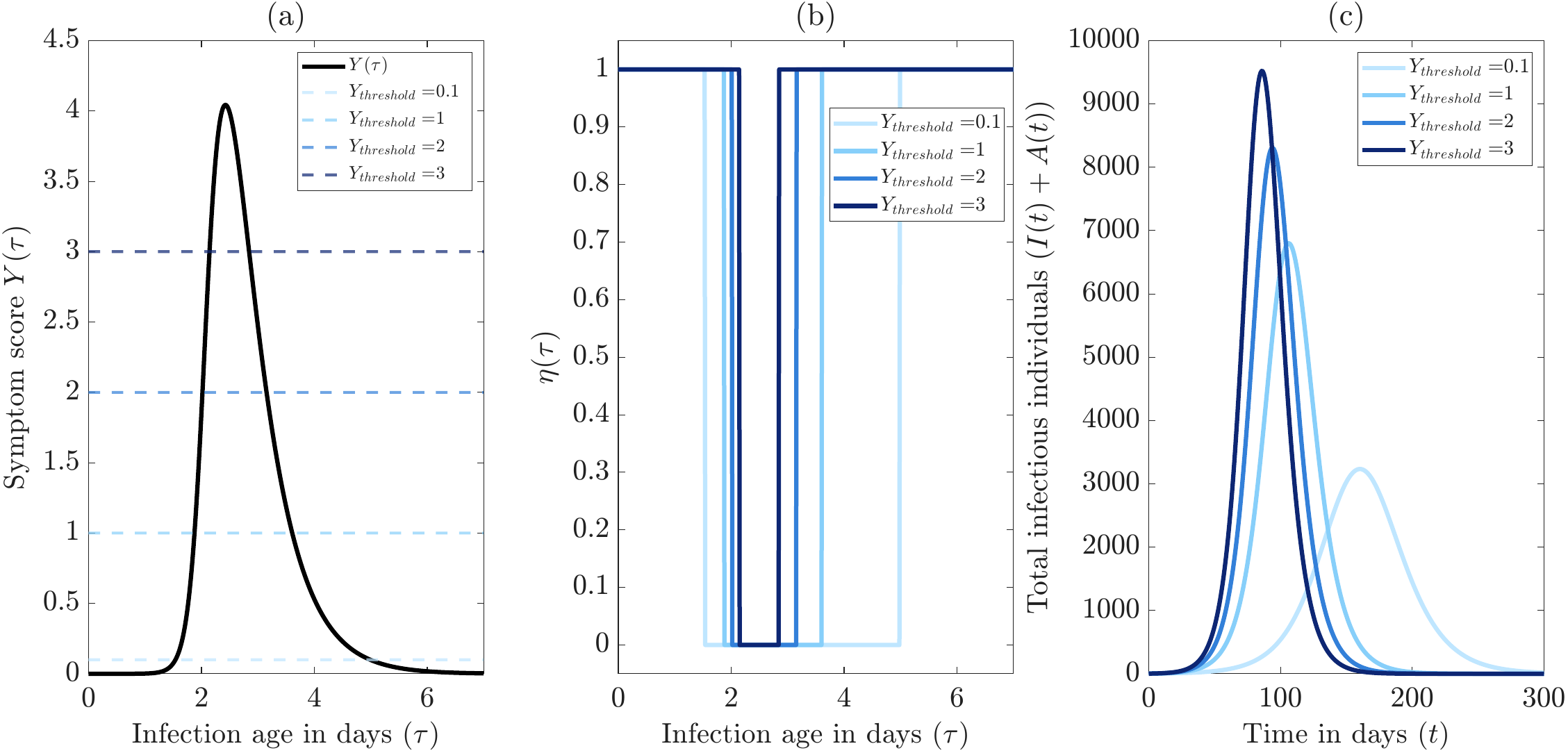}
\caption{Effect of the symptom-score based threshold linking function on population-level disease dynamics. (a) Within-host symptom score $Y(\tau)$ as a function of infection age $\tau$, with horizontal dashed lines marking candidate thresholds $Y_{threshold} = 0.1, 1, 2, 3$. (b) Corresponding behavioral response function $\eta(\tau)$: individuals avoid contacts only while $Y(\tau)$ exceeds the threshold (i.e., $\eta$ drops to 0), so lower thresholds trigger earlier onset and longer duration of behavior change. (c) Resulting population-level epidemic curves, total infectious individuals $I(t) + A(t)$, under each threshold. Lower thresholds (earlier, more prolonged symptom-driven behavior change) produce smaller, delayed, and more spread-out epidemic peaks, while higher thresholds, which correspond to little to no behavioral response, yield larger, earlier, and sharper peaks.}
\label{fig:isolation_linking}
\end{figure}


\newpage
\section{SEIR Model with Behavior Change}\label{SEIRb}

We adapt one of the models presented in Pant et al. \cite{pant2026paradox} to demonstrate how models that account for population-level behavior change due to disease-information at a population level (here, number of infectious individuals) can lead to heightened peak suppression, while slower and elongated epidemic decline with an increase in behavior change (see Figure \ref{fig:seir-behavior}).  The adapted model is given by \cite{pant2026paradox}:

\begin{equation}
\left\{
\begin{aligned}
\frac{dS}{dt} &= \Lambda -\beta_0 e^{-aI}\frac{SI}{N} -\mu S, \\
\frac{dE}{dt} &= \beta_0 e^{-aI}\frac{SI}{N} - (\sigma+\mu) E, \\
\frac{dI}{dt} &= \sigma E - (\gamma+\mu) I, \\
\frac{dR}{dt} &= \gamma I - \mu R,
\end{aligned}
\right.
\label{eq:seir-behavior}
\end{equation}
where parameter $a$ is the population-level behavior change parameter that can be viewed as a sensitivity to population-level knowledge of the number of infectious individuals. This interpretation of parameter \(a\) in the adapted model contrasts with its role in the multiscale model \model. In the latter, although parameter \(a\) still governs population-level behavior change, it operates exclusively on symptomatic individuals and is driven solely by the severity of the symptoms they experience.

\begin{figure}[htbp]
    \centering
    \includegraphics[width=\textwidth]{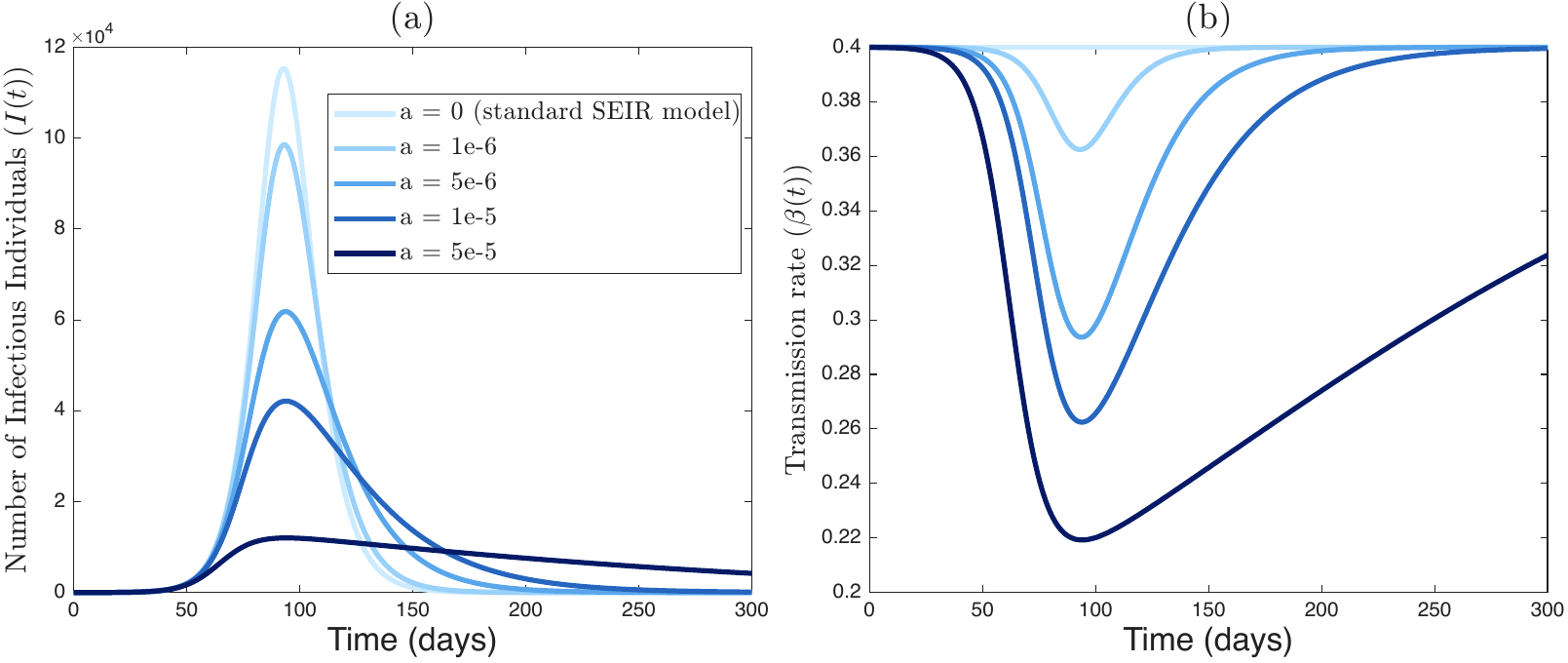}
    \caption{Output of Model~\eqref{eq:seir-behavior} showcasing the impact of increasing behavior change parameter on (a) the number of infectious individuals, and (b) the corresponding transmission rate over the same period. Setting $a=0$ in \eqref{eq:seir-behavior}  recovers the standard SEIR model with constant transmission rate $\beta_0$; increasing parameter $a$ produces results in a flattened, delayed epidemic peak and a slower, more prolonged decline. In this figure, the behavior change parameter is varied while the rest of the parameters are kept fixed at the same values as in \Cref{tab:param_value}, specifically, $\Lambda = 3.5215~\text{day}^{-1}$, $\mu = 1/(77.8\times 365)~\text{day}^{-1}$, $\sigma = 1/1.6~\text{day}^{-1}$, $\gamma = 1/5~\text{day}^{-1}$, $\beta_0 = 0.4~\text{day}^{-1}$, $N = 10^6$, $S(0) = 999,995$, $E(0)=5$, and $I(0)=R(0)=0$. This figure is inspired by a Figure 1 presented in Pant et al. \cite{pant2026paradox}.}
    \label{fig:seir-behavior}
\end{figure}

\setcounter{figure}{0} 
\end{appendices}
\newpage
\bibliographystyle{ieeetr}
\bibliography{references.bib}

@article{canini2011population,
  title={Population modeling of influenza A/H1N1 virus kinetics and symptom dynamics},
  author={Canini, Laetitia and Carrat, Fabrice},
  journal={Journal of virology},
  volume={85},
  number={6},
  pages={2764--2770},
  year={2011},
  publisher={Am Soc Microbiol}
}

@article{pant2026paradox,
  title={The paradox of neglecting changes in behavior: How standard epidemic models misestimate both transmissibility and final epidemic size},
  author={Pant, Binod and Lalovic, Marko and Kiss, Istv{\'a}n Z and Santillana, Mauricio},
  journal={PLOS Computational Biology},
  volume={22},
  number={9},
  pages={e1014746},
  year={2026},
  publisher={Public Library of Science San Francisco, CA USA}
}

@article{pant2026uncovering,
  title={Uncovering identifiability of epidemiological models: basic reproduction number and complementary data streams},
  author={Pant, Binod and Saucedo, Omar and Pogudin, Gleb},
  journal={medRxiv},
  pages={2026--01},
  year={2026},
  publisher={Cold Spring Harbor Laboratory Press}
}

@article{diekmann1990definition,
  title={On the definition and the computation of the basic reproduction ratio R 0 in models for infectious diseases in heterogeneous populations},
  author={Diekmann, Odo and Heesterbeek, Johan Andre Peter and Metz, Johan Anton Jacob},
  journal={Journal of mathematical biology},
  volume={28},
  number={4},
  pages={365--382},
  year={1990},
  publisher={Springer}
}

@article{dobrovolny20asymp,
    title = {Modeling the role of asymptomatics in infection spread with application to {SARS-CoV}-2},
	author = {Hana M. Dobrovolny}, 
	journal = {Plos One},
	volume = {15},
	number = {8},
	month = {August 10},
	year = {2020},
	pages = {e0236976},
	doi = {10.1371/journal.pone.0236976}
}

@article{oveson2025modeling,
  title={Modeling the impact of hospitalization-induced behavioral changes on the spread of COVID-19 in New York City},
  author={Oveson, Alice and Girvan, Michelle and Gumel, Abba B},
  journal={Infectious Disease Modelling},
  volume={10},
  number={4},
  pages={1055--1092},
  year={2025},
  publisher={Elsevier}
}

@article{lejeune2024mathematical,
  title={Mathematical analysis of simple behavioral epidemic models},
  author={LeJeune, Leah and Ghaffarzadegan, Navid and Childs, Lauren M and Saucedo, Omar},
  journal={Mathematical Biosciences},
  volume={375},
  pages={109250},
  year={2024},
  publisher={Elsevier}
}

@article{saad2026interactions,
  title={Interactions between immuno-epidemiology and individual decision-making for nonpharmaceutical interventions},
  author={Saad-Roy, Chadi M and Abraham, Ninan and Hilbe, Christian and Mahmud, Ayesha S and Traulsen, Arne},
  journal={Trends in Microbiology},
  year={2026},
  publisher={Elsevier}
}

@article{berestycki2021plateaus,
  title={Plateaus, rebounds and the effects of individual behaviours in epidemics},
  author={Berestycki, Henri and Desjardins, Benoit and Heintz, Bruno and Oury, Jean-Marc},
  journal={Scientific reports},
  volume={11},
  number={1},
  pages={18339},
  year={2021},
  publisher={Nature Publishing Group UK London}
}

@article{weitz2020awareness,
  title={Awareness-driven behavior changes can shift the shape of epidemics away from peaks and toward plateaus, shoulders, and oscillations},
  author={Weitz, Joshua S and Park, Sang Woo and Eksin, Ceyhun and Dushoff, Jonathan},
  journal={Proceedings of the National Academy of Sciences},
  volume={117},
  number={51},
  pages={32764--32771},
  year={2020},
  publisher={National Academy of Sciences}
}

@article{cori2012estimating,
  title={Estimating influenza latency and infectious period durations using viral excretion data},
  author={Cori, Anne and Valleron, Alain-Jacques and Carrat, Fabrice and Tomba, G Scalia and Thomas, Gilles and Bo{\"e}lle, Pierre-Yves},
  journal={Epidemics},
  volume={4},
  number={3},
  pages={132--138},
  year={2012},
  publisher={Elsevier}
}

@article{urmi2024characterizing,
  title={Characterizing Population-level Changes in Human Behavior during the COVID-19 Pandemic in the United States},
  author={Urmi, Tamanna and Pant, Binod and Dewey, George and Quintana-Math{\'e}, Alexi and Lang, Iris and Druckman, James N and Ognyanova, Katherine and Baum, Matthew and Perlis, Roy H and Riedl, Christoph and others},
  journal={medRxiv},
  pages={2024--12},
  year={2024},
  publisher={Cold Spring Harbor Laboratory Press}
}

@article{maniscalco2024adaptive,
  title={Adaptive behavior in response to the 2022 mpox epidemic in the Paris region},
  author={Maniscalco, Davide and Robineau, Olivier and Bo{\"e}lle, Pierre-Yves and Mazzoli, Mattia and Barret, Anne-Sophie and Chazelle, Emilie and Mailles, Alexandra and No{\"e}l, Harold and Tarantola, Arnaud and Velter, Annie and others},
  journal={medRxiv},
  pages={2024--10},
  year={2024},
  publisher={Cold Spring Harbor Laboratory Press}
}

@article{kiss2010impact,
  title={The impact of information transmission on epidemic outbreaks},
  author={Kiss, Istvan Z and Cassell, Jackie and Recker, Mario and Simon, P{\'e}ter L},
  journal={Mathematical biosciences},
  volume={225},
  number={1},
  pages={1--10},
  year={2010},
  publisher={Elsevier}
}

@article{gilchrist2002modeling,
  title={Modeling host--parasite coevolution: a nested approach based on mechanistic models},
  author={Gilchrist, Michael A and Sasaki, Akira},
  journal={Journal of Theoretical Biology},
  volume={218},
  number={3},
  pages={289--308},
  year={2002},
  publisher={Elsevier}
}

@article{ma2021global,
  title={Global percentage of asymptomatic SARS-CoV-2 infections among the tested population and individuals with confirmed COVID-19 diagnosis: a systematic review and meta-analysis},
  author={Ma, Qiuyue and Liu, Jue and Liu, Qiao and Kang, Liangyu and Liu, Runqing and Jing, Wenzhan and Wu, Yu and Liu, Min},
  journal={JAMA network open},
  volume={4},
  number={12},
  pages={e2137257--e2137257},
  year={2021},
  publisher={American Medical Association}
}

@article{moghadas2020implications,
  title={The implications of silent transmission for the control of COVID-19 outbreaks},
  author={Moghadas, Seyed M and Fitzpatrick, Meagan C and Sah, Pratha and Pandey, Abhishek and Shoukat, Affan and Singer, Burton H and Galvani, Alison P},
  journal={Proceedings of the National Academy of Sciences},
  volume={117},
  number={30},
  pages={17513--17515},
  year={2020},
  publisher={National Academy of Sciences}
}

@article{prochazka2024temporary,
  title={Temporary adaptations to sexual behaviour during the mpox outbreak in 23 countries in Europe and the Americas: findings from a retrospective cross-sectional online survey},
  author={Prochazka, Mateo and Vinti, Pietro and Hoxha, Ana and Seale, Andy and Mozalevskis, Antons and Lewis, Rosamund and Sagastume, Ruben Mayorga and Scherzer, Martha and Dore, Leilia and Doherty, Meg},
  journal={The Lancet Infectious Diseases},
  volume={24},
  number={12},
  pages={1309--1318},
  year={2024},
  publisher={Elsevier}
}

@article{mclean2017school,
  title={School absenteeism among school-aged children with medically attended acute viral respiratory illness during three influenza seasons, 2012-2013 through 2014-2015},
  author={McLean, Huong Q and Peterson, Siri H and King, Jennifer P and Meece, Jennifer K and Belongia, Edward A},
  journal={Influenza and other respiratory viruses},
  volume={11},
  number={3},
  pages={220--229},
  year={2017},
  publisher={Wiley Online Library}
}

@misc{cnn_1918_masks_2020,
  author = {{CNN}},
  title = {In the 1918 flu pandemic, not wearing a mask was illegal in some parts of America. What changed?},
  year = {2020},
  month = {April},
  day = {5},
  url = {https://www.cnn.com/2020/04/03/americas/flu-america-1918-masks-intl-hnk},
  urldate = {2025-08-29},
  note = {Accessed: August 29, 2025}
}

@book{martcheva2015introduction,
  title={An introduction to mathematical epidemiology},
  author={Martcheva, Maia},
  volume={61},
  year={2015},
  publisher={Springer}
}

@article{pant2024behavior,
  title={Mathematical assessment of the role of human behavior changes on SARS-CoV-2 transmission dynamics in the united states},
  author={Pant, Binod and Safdar, Salman and Santillana, Mauricio and Gumel, Abba B},
  journal={Bulletin of mathematical biology},
  volume={86},
  number={8},
  pages={92},
  year={2024},
  publisher={Springer}
}

@article{argyropoulos2020association,
  title={Association of initial viral load in severe acute respiratory syndrome coronavirus 2 (SARS-CoV-2) patients with outcome and symptoms},
  author={Argyropoulos, Kimon V and Serrano, Antonio and Hu, Jiyuan and Black, Margaret and Feng, Xiaojun and Shen, Guomiao and Call, Melissa and Kim, Min J and Lytle, Andrew and Belovarac, Brendan and others},
  journal={The American journal of pathology},
  volume={190},
  number={9},
  pages={1881--1887},
  year={2020},
  publisher={Elsevier}
}

@article{crystal2008airway,
  title={Airway epithelial cells: current concepts and challenges},
  author={Crystal, Ronald G and Randell, Scott H and Engelhardt, John F and Voynow, Judith and Sunday, Mary E},
  journal={Proceedings of the American Thoracic Society},
  volume={5},
  number={7},
  pages={772--777},
  year={2008},
  publisher={American Thoracic Society}
}

@article{challenger2022modelling,
  title={Modelling upper respiratory viral load dynamics of SARS-CoV-2},
  author={Challenger, Joseph D and Foo, Cher Y and Wu, Yue and Yan, Ada WC and Marjaneh, Mahdi Moradi and Liew, Felicity and Thwaites, Ryan S and Okell, Lucy C and Cunnington, Aubrey J},
  journal={BMC medicine},
  volume={20},
  number={1},
  pages={25},
  year={2022},
  publisher={Springer}
}

@article{he2020temporal,
  title={Temporal dynamics in viral shedding and transmissibility of COVID-19},
  author={He, Xi and Lau, Eric HY and Wu, Peng and Deng, Xilong and Wang, Jian and Hao, Xinxin and Lau, Yiu Chung and Wong, Jessica Y and Guan, Yujuan and Tan, Xinghua and others},
  journal={Nature medicine},
  volume={26},
  number={5},
  pages={672--675},
  year={2020},
  publisher={Nature Publishing Group US New York}
}

@article{fraser2007variation,
  title={Variation in HIV-1 set-point viral load: epidemiological analysis and an evolutionary hypothesis},
  author={Fraser, Christophe and Hollingsworth, T D{\'e}irdre and Chapman, Ruth and de Wolf, Frank and Hanage, William P},
  journal={Proceedings of the National Academy of Sciences},
  volume={104},
  number={44},
  pages={17441--17446},
  year={2007},
  publisher={National Academy of Sciences}
}

@article{atkins2023immuno,
  title={An Immuno-Epidemiological Model of Foot-and-Mouth Disease (FMD) in the African Buffalo Population with Carrier Transmission},
  author={Atkins, Summer and Gulbudak, Hayriye and Welker, J Shane and Smith, Houston},
  journal={bioRxiv},
  pages={2023--09},
  year={2023},
  publisher={Cold Spring Harbor Laboratory}
}

@article{saldana2025multiscale,
  title={Multiscale Modeling of Vector-Borne Diseases: The Role of Dose-Dependent Transmission},
  author={Salda{\~n}a, Fernando and Velasco-Hern{\'a}ndez, Jorge X and Ezanno, Pauline and Cecilia, H{\'e}l{\`e}ne},
  journal={bioRxiv},
  pages={2025--08},
  year={2025},
  publisher={Cold Spring Harbor Laboratory}
}

@article{gumel2021primer,
  title={A primer on using mathematics to understand COVID-19 dynamics: Modeling, analysis and simulations},
  author={Gumel, Abba B and Iboi, Enahoro A and Ngonghala, Calistus N and Elbasha, Elamin H},
  journal={Infectious Disease Modelling},
  volume={6},
  pages={148--168},
  year={2021},
  publisher={Elsevier}
}

@article{pant2024hetro,
  title={Mathematical assessment of the roles of age heterogeneity and vaccination on the dynamics and control of SARS-CoV-2},
  author={Pant, Binod and Gumel, Abba B},
  journal={Infectious Disease Modelling},
  volume={9},
  number={3},
  pages={828--874},
  year={2024},
  publisher={Elsevier}
}

@article{van2002reproduction,
  title={Reproduction numbers and sub-threshold endemic equilibria for compartmental models of disease transmission},
  author={van den Driessche, Pauline and Watmough, James},
  journal={Mathematical Biosciences},
  volume={180},
  number={1-2},
  pages={29--48},
  year={2002},
  publisher={Elsevier}
}

@article{GulbudakDengue,
title = {Infection severity across scales in multi-strain immuno-epidemiological Dengue model structured by host antibody level},
author = {Gulbudak, Hayriye and Browne, Cameron J.},
journal ={Journal of Mathematical Biology},
volume ={80},
number ={6},
year = {2020},
doi = {10.1007/s00285-020-01480-3}
}

@article{GulbudakRift,
author = {Tuncer, Necibe and Gulbudak, Hayriye and Cannataro, Vincent L. and Martcheva, Maia},
title ={Structural and Practical Identifiability Issues of Immuno-Epidemiological Vector–Host Models with Application to Rift Valley Fever},
journal ={Bulletin of Mathematical Biology},
year = {2016},
number ={9},
volume ={78},
doi = {10.1007/s11538-016-0200-2}
}

@article{gulbudak2022sensitivity,
  title={Sensitivity analysis in an immuno-epidemiological vector-host model},
  author={Gulbudak, Hayriye and Qu, Zhuolin and Milner, Fabio and Tuncer, Necibe},
  journal={Bulletin of mathematical biology},
  volume={84},
  number={2},
  pages={27},
  year={2022},
  publisher={Springer}
}

\end{document}